\newcommand{\C}{{\mathbb C}}
\newcommand{\one}{\mathbbm{1}}
\newcommand{\cC}{{\mathcal C}}
\newcommand{\SU}{\mathrm{SU}}
\newcommand{\SL}{\mathrm{SL}}
\def\T{T^{\Gamma}}
\def\G{{\cal G}_{\Gamma}}
\newcommand{\be}{\begin{equation}}
\newcommand{\ee}{\end{equation}}
\newcommand{\beq}{\begin{eqnarray}}
\newcommand{\eeq}{\end{eqnarray}}
\newcommand{\bea}{\begin{eqnarray}}
\newcommand{\eea}{\end{eqnarray}}
\newcommand{\mat} [2] {\left ( \begin{array}{#1}#2\end{array} \right ) }
\newcommand{\bra}{\langle}
\newcommand{\ket}{\rangle}
\newcommand{\ra}{\rangle}
\newcommand{\sgn}{\mathrm{sgn}}
\newcommand{\rd}{\mathrm{d}}
\newcommand{\bpm}{\begin{pmatrix}}
\newcommand{\epm}{\end{pmatrix}}
\newcommand{\bvm}{\begin{vmatrix}}
\newcommand{\evm}{\end{vmatrix}}
\newtheorem{theorem}{Theorem}[section]
\newtheorem{lemma}[theorem]{Lemma}
\newtheorem{proposition}[theorem]{Proposition}
\newtheorem{corollary}[theorem]{Corollary}
\newtheorem{definition}[theorem]{Definition}
\begin{document}

\title{On the exact evaluation of spin networks}

\author{{\bf Laurent Freidel}\email{lfreidel@perimeterinsititute.ca}}
\author{{\bf Jeff Hnybida}\email{jhnybida@perimeterinsititute.ca}}
\affiliation{ Perimeter Institute for Theoretical Physics,
Waterloo, N2L-2Y5, Ontario, Canada.}

\begin{abstract}
We introduce a fully coherent spin network amplitude whose expansion generates all SU$(2)$ spin networks 
associated with a given graph.
We then  give an explicit evaluation of this  amplitude for an arbitrary graph.
We show how this coherent amplitude can be obtained from the specialization of a generating functional obtained by 
the contraction of parametrized intertwiners \`a la Schwinger. We finally give the explicit evaluation of this generating functional for arbitrary graphs.
\end{abstract}

\maketitle


\section{Introduction}
Spin networks consist of graphs labeled by representations of SU$(2)$. They play a fundamental role in quantum gravity in two respects.
First, they arise as a basis of states for 4 dimensional quantum gravity formulated in terms of Ashtekar-Barbero variables \cite{AL}.
Second, in spin foam models the evaluations of spin networks appear in the expression of the quantum transition amplitude between boundary spin network states. 
In 3 dimensions the amplitude associated with a tetrahedron is the 6j symbol or the spin network evaluation of a tetrahedral graph \cite{PR}.
Similarly, in 4 dimensions the amplitude associated with a 4-simplex is given by a product of two 15j-symbols with labels being related via the Immirzi 
parameter \cite{EPRL, FK}.
These results have triggered an extensive study of the properties of these symbols especially concerning their asymptotic properties.
It has been known for a while that the 6j symbol is asymptotically related to the cosine of the Regge action, and
moreover  it has been shown more recently that these results extend to the 15j symbol \cite{BarrettAs1,BarrettAs2,FC1,FC2}.
In order to understand the asymptotic property of the spin network evaluation,
a key insight was to express these amplitudes in terms of coherent states and express the amplitudes as a functional 
of spinor variables.
Furthermore, the  coherent state representation has been found to admit a corresponding geometrical interpretation in terms of 
polyhedral \cite{poly} and twisted geometries \cite{twistedgeo} where the spin labels represent the area of a polygonal face.
It was also recognized that further simplifications of the amplitudes and other structures involving the coherent intertwiners could be achieved by summing
the amplitudes over the spins with certain weight while keeping the total spin associated with each of the vertices fixed.
In this case the amplitudes were found to exhibit an extra U$(N)$ symmetry that renders certain computations extremely efficient \cite{UN1,UN2,UN3}.
What we propose  here is to go one step further and consider fully coherent spin network amplitudes 
obtained by summing over all spins with a specific weight.
Such a proposal has also been developed recently by Livine and Dupuis \cite{LD} in order to write spin foam models more efficiently.

What we show is that by carefully choosing the weight of the spin network amplitudes we can compute them {\it exactly} for an arbitrary graph. This is our main result.
These fully coherent amplitudes contain as their expansion coefficients the spin network evaluations  for arbitrary spins and therefore 
 can be understood as {\it a generating functional } for all spin networks evaluations.
 Similar generating functionals have been studied before, first by Schwinger \cite{Schwinger} and then further developed by Bargmann \cite{Bargmann} (see also \cite{Labarthe}).
Much later, an explicit evaluation of the generating functional for the {\it chromatic} (or Penrose \cite{Penrose}) evaluation of a spin network on a planar trivalent graph was given by Westbury \cite{Westbury}.
More  recently,  Garoufalidis et al. \cite{Garoufalidis} extended the evaluation of the chromatic  generating functional  to non planar graphs and  Costantino and Marche \cite{Costantino} to the case where holonomies along the edges are present.
In our case we focus on a slightly different generating functional that does not generate the chromatic evaluation but rather the usual spin network  evaluation (i.e. the one obtained by the contraction of intertwiners).  This evaluation differs from the chromatic one by an overall sign \cite{BarrettPR} and is much simpler in the non planar case.
We present new techniques that allow this  generating functional to be represented as a Gaussian integral and finally as the reciprocal of a polynomial.  Furthermore, these results are valid for graphs of full generality such as those which are non-planar or of higher valency.

\section{Coherent evaluation of the vertex and general graphs}
One of the key recent developments concerning spin foam amplitudes has been the ability to express them in terms of  SU$(2)$ coherent states.
In the following we denote the coherent states and their contragradient version by
\be
|z\ra\equiv \mat{c}{\alpha\\ \beta}, \qquad
| z] \equiv \mat{c}{-\bar{\beta}\\\bar{\alpha}}.
\ee
The bracket between these two spinors $[z_{1}|z_{2}\ket= \alpha_{1}\beta_{2}-\alpha_{2}\beta_{1}$ is purely holomorphic
and anti-symmetric with respect to the exchange of $z_{1}$ with $z_2$ .
We will also denote the conjugate states by $\bra z| =(\bar{\alpha},\bar{\beta})$ and $[z|=(-\beta,\alpha)$.

The vertex amplitude for SU$(2)$ BF theory, expressed in terms of the SU$(2)$ coherent states,
depends on $10$ spins $j_{ij}$ and depends holomorphically on $20$ spinors $|z_{ij}\ket \neq |z_{ji}\ket$; it is based on a 4-simplex graph and is given by:
\be
A_{4S}(j_{ij}, z_{ij}) =\int_{\mathrm{SU(2)^{5}}}
\prod_{i} \rd g_{i}   
 \prod_{i<j} [z_{ij}|g_{i}^{-1}g_{j}|z_{ji}\ket^{2j_{ij}}.
\ee
One of the key advantages of expressing the vertex amplitude in terms of coherent states is the ease with which to compute  
its asymptotic properties \cite{BarrettAs1, BarrettAs2, FC2}.
However, even if the asymptotic property of this amplitude is known, we do not know how to
 to compute it explicitly.
What we are going to show is that by resumming these amplitudes  in terms of a fully coherent amplitudes
where $j_{ij}$ is summed over, we can get an {\it exact} expression for the vertex amplitude.

Let us denote $J_{i} \equiv \sum_{j|j\neq i } j_{ij}$ and define the following vertex amplitude
which now depends only on the spinors:
\be\label{newvertex}
{\cal{A}}_{4S}(z_{ij} ) \equiv \sum_{j_{ij}} \frac{\prod_{i} (J_{i}+1)!}{\prod_{i<j}(2j_{ij})!} A_{4S}(j_{ij}, z_{ij}).
\ee
Since $ |A_{4S}(j_{ij}, z_{ij})| \leq \prod_{i<j}| [z_{ij}|z_{ji}\ket|^{2j_{ij}}$ it can be easily seen that such a series admits a non-zero radius of convergence.
This amplitude can be thought of as a generating functional for the vertex amplitude $A_{4S}(j_{ij}, z_{ij})$, where the magnitudes of the $\bra z_{ij}|z_{ij}\ket$ determine which spin $j_{ij}$ the amplitude is peaked on.
Let us finally note that the particular set of coefficients we use in order to sum 
the coherent state amplitudes is motivated by the U$(N)$ perspective \cite{UN1,UN2}.  That is, if we denote by $||j_{i},z_{i}\ket $ the SU$(2)$ coherent intertwiners and by $|J, z_{i})$ the SU$(N)$ coherent states we have the relation
\be\label{un}
\frac{|J,z_{i})}{\sqrt{J!}} = \sum_{\sum_{i}j_{i}=J} \sqrt{\frac{{(J+1)!}}{\prod_{i} (2j_{i})!}} ||j_{i},z_{i}\ket
\ee 
where 
$
||j_{i},z_{i}\ket =\int \rd g \,\otimes_{i} \, \left(g |z_{i}\ket \right)^{ 2j_{i}},
$
is the coherent intertwiner \cite{LS}.
The idea to use generalized coherent states which include the sum over all spins in a way compatible with the U$(N)$ symmetry, has been already proposed in 
\cite{UN3} in order to treat all simplicity constraints arising in the spin foam formulation of gravity on the same footing.
One ambiguity concerns the choice of the measure factor used to perform the summation over the total spin $J$.
The choice to sum states as $\sum_{J} |J,z_{i})/{\sqrt{J!}}$ differs form the one taken in \cite{UN3}, but is ultimately justified for us by the fact that the spin network amplitude can be exactly evaluated.

The definition of the generally coherent amplitude is not limited to the 4-simplex, it can be extended to any spin network:
More generally, let $\Gamma$ be an oriented graph with  edges denoted by $e$ and vertices  by $v$.
We assign two spinors $z_{e}, z_{e^{-1}}$ to each oriented edge $e$, one for $e$ and one for the reverse oriented edge $e^{-1}$.
We also assign spins $j_{e}= j_{e^{-1}}$ to every edge and define
$$J_{v} \equiv\sum_{e:s_{e}=v} j_{e} + \sum_{e: t_{e}=v} j_{e}$$
where $s_{e}$ (resp. $t_{e}$) is the starting (resp. terminal vertex) of the edge $e$.

Given this data we define a functional depending on $j_{e}$ and holomorphically on all $z_{e}$  given by:
\be\label{def1}
A_{\Gamma}(j_{e},z_{e}) 
\equiv \int \prod_{v\in V_{\Gamma}} \rd g_{v} \prod_{e \in E_{\Gamma}}[z_{e}|g_{s_e}g_{t_e}^{-1}|z_{e^{-1}}\ket^{2j_{e}}
\ee
where we define $E_{\Gamma}$ to be the set 
of  edges of $\Gamma$ and  $V_{\Gamma}$  the set of vertices. 
Finally, we introduce the following amplitude depending purely on the spinors
\bea\label{def2}
{\cal A}_{\Gamma}(z_{e})
&\equiv& \sum_{j_{e}} \frac{ \prod_{v\in V_{\Gamma}}(J_{v}+1)!}{\prod_{e\in E_{\Gamma} }(2j_{e})!}A_{\Gamma}(j_{e},z_{e}), \\
&=& \sum_{j_{e}} \prod_{v\in V_{\Gamma}}(J_{v}+1)! \int \prod_{i\in V_{\Gamma}} \rd g_{i} \prod_{e \in E_{\Gamma}} \frac{[z_{e}|g_{s_e}g_{t_e}^{-1}|z_{e^{-1}}\ket^{2j_{e}}}{(2j_{e})!}. \label{def3}
\eea
The main motivation for this definition comes from the fact that it can be explicitly evaluated, and this follows from the fact that this amplitude can be expressed as a Gaussian integral.
\begin{lemma}
The fully coherent amplitude can be evaluated as a Gaussian integral
\bea
{\cal A}_{\Gamma}(z_{e}) 
&=& \int_{\C^{2|V_{\gamma}|}} \prod_{i \in V_{\Gamma}}\rd\mu(\alpha_{i}) 
\exp\left(-{\sum_{i,j \in V_{\Gamma}}\bra \alpha_{i} | X_{ij} |\alpha_{j} \ket}\right) 
\eea
where $\rd\mu(\alpha) \equiv e^{-\bra\alpha |\alpha \ket } \rd^{4} \alpha / \pi^{2}$; 
and $X_{ij}$ is a 2 by 2 matrix which vanishes if there is no edge between $i$ and $j$.
If $(ij)=e$ is an edge of $\Gamma$,  $X_{ij}$ is given by
\be \label{eqn_matrix_X}
X_{ij} = \sum_{e|s_{e}=i, t_{e}=j} |z_{e}\ket [z_{e^{-1}}| - \sum_{e|t_{e}=i, s_{e}=j} |z_{e^{-1}}\ket [{z}_{e}|.
\ee
 This Gaussian integral can be evaluated giving
\be \label{det1}
{\cal A}_{\Gamma}(z_{e}) = \frac{1}{\det(1+X(z_{e}))}.
\ee
\end{lemma}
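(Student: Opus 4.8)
The plan is to establish the two displayed identities in turn, starting from the form \Ref{def3}. The only obstruction to summing the series over the $j_e$ term by term is the vertex weight $\prod_v(J_v+1)!$, which entangles the spins meeting at a given vertex; the point is that this is exactly the factor needed to trade each $\SU(2)$ Haar integral for a Gaussian integral over $\C^2$. Writing $g(\alpha)$ for the $2\times 2$ matrix whose columns are $|\alpha\ket$ and $|\alpha]$, so that $\det g(\alpha)=\bra\alpha|\alpha\ket$ and $g(\alpha)/\sqrt{\bra\alpha|\alpha\ket}\in\SU(2)$, the first step is the elementary identity
\be
(J+1)!\int_{\SU(2)}\rd g\;f(g)=\int_{\C^{2}}\rd\mu(\alpha)\;f\big(g(\alpha)\big),
\ee
valid for any polynomial $f$ in the entries of $g$ and $g^{-1}=g^{\dagger}$ that is homogeneous of total degree $2J$. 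This follows by parametrizing $\SU(2)\cong S^{3}$ by $\alpha/\sqrt{\bra\alpha|\alpha\ket}$, recognizing the normalized Haar measure as the angular part of $\rd\mu$ on $\C^{2}=\R^{4}$, and evaluating the radial integral $\int_{0}^{\infty}e^{-r^{2}}r^{3+2J}\,\rd r=\tfrac12\,\Gamma(J+2)=\tfrac12(J+1)!$; only the homogeneity is used, irrespective of how $2J$ splits into holomorphic and antiholomorphic degree.

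Applying this at each vertex $v$ — where the total degree in $g_v$ is $2J_v$, since every incident edge contributes degree $2j_e$ to $g_v$ (if $s_e=v$) or to $g_v^{-1}$ (if $t_e=v$) — turns the right-hand side of \Ref{def3} into $\sum_{j_e}\int\prod_v\rd\mu(\alpha_v)\prod_e[z_e|g(\alpha_{s_e})g(\alpha_{t_e})^{\dagger}|z_{e^{-1}}\ket^{2j_e}/(2j_e)!$. Because each $2j_e$ ranges over the nonnegative integers, the sums over the $j_e$ now simply exponentiate the integrand, and together with the $e^{-\bra\alpha_v|\alpha_v\ket}$ coming from the Bargmann measures one obtains a Gaussian integral. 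It remains to identify the exponent. Using $g(\alpha)g(\alpha)^{\dagger}=|\alpha\ket\bra\alpha|+|\alpha][\alpha|$ one gets, for each edge,
\be
[z_e|g(\alpha_{s_e})g(\alpha_{t_e})^{\dagger}|z_{e^{-1}}\ket=[z_e|\alpha_{s_e}\ket\,\bra\alpha_{t_e}|z_{e^{-1}}\ket+[z_e|\alpha_{s_e}]\,[\alpha_{t_e}|z_{e^{-1}}\ket,
\ee
and, reorganizing the scalar factors and using the spinor identities $[a|b]=\bra b|a\ket$ and $[a|b\ket=-[b|a\ket$, these become $\bra\alpha_{t_e}|\big(|z_{e^{-1}}\ket[z_e|\big)|\alpha_{s_e}\ket$ and $-\bra\alpha_{s_e}|\big(|z_e\ket[z_{e^{-1}}|\big)|\alpha_{t_e}\ket$. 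Summed over all edges these reproduce exactly $-\sum_{i,j}\bra\alpha_i|X_{ij}|\alpha_j\ket$ with $X$ as in \Ref{eqn_matrix_X}: the first term feeds the $(t_e,s_e)$ block, the second the $(s_e,t_e)$ block, and the relative sign is precisely the orientation-dependent sign distinguishing the two sums in \Ref{eqn_matrix_X}. This proves the first displayed equation of the Lemma.

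For the second, with $\cA_\Gamma(z_e)=\int\prod_i\rd\mu(\alpha_i)\exp\!\big(-\sum_{i,j}\bra\alpha_i|(\delta_{ij}+X_{ij})|\alpha_j\ket\big)$, the exponent is $-\bra\alpha|(1+X)|\alpha\ket$ on $\C^{2|V_\Gamma|}$, and the standard complex Gaussian formula $\int\prod_k(\rd^{2}\zeta_k/\pi)\,e^{-\zeta^{\dagger}M\zeta}=1/\det M$ with $M=1+X$ gives $1/\det(1+X)$. That formula requires the Hermitian part of $1+X$ to be positive definite; this holds for $z_e$ small, where $X$ is small, so the identity holds there, and since $1/\det(1+X)$ is rational and $\cA_\Gamma$ is holomorphic near the origin with nonzero radius of convergence, it extends by analytic continuation to the locus $\det(1+X)\neq 0$. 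I expect the fiddliest point to be the identification of the quadratic form: one must keep straight the four brackets $\bra\cdot|\cdot\ket,\ [\cdot|\cdot\ket,\ [\cdot|\cdot],\ \bra\cdot|\cdot]$, the conjugations (it is the $g_v^{-1}$ factors that bring in $\bra\alpha_v|$ rather than $[\alpha_v|$), and all the signs — whereas the conceptual crux is the $(J_v+1)!\leftrightarrow$ Gaussian conversion, which is what collapses the whole computation to a determinant. The two analytic subtleties, interchanging $\sum_{j_e}$ with the $\C^{2}$ integrals and the non-Hermiticity of $M$, are both handled by working first in the regime of convergence and then continuing.
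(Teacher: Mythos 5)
Your proof is correct and follows essentially the same route as the paper's: the homogeneity identity $(J+1)!\int \rd g\, F(g)=\int \rd\mu(\alpha)F(g(\alpha))$ applied at each vertex, term-by-term summation over the $j_e$ to exponentiate, the same spinor-bracket manipulations to identify the quadratic form with $X$, and the standard Gaussian determinant formula. The only additions are your (welcome but inessential) remarks on convergence and analytic continuation, which the paper passes over in silence.
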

\begin{proof}
Given a group element $g_{i}\in \SU(2)$ we can construct a unit spinor  $|\alpha_{i}\ket \equiv g_{i}^{-1}|0\ket$ where $|0\ket = (1 \: 0)^T$.
Using the the decomposition of the identity $\one=|0\ket\bra0| + |0][0|$,  we can express the group product
as 
\bea
g_{i}^{-1} g_{j} & = & g_{i}^{-1}(|0\ket\bra0| + |0][0|) g_{j} = |\alpha_{i}\ket\bra \alpha_{j}| + |\alpha_{i}][\alpha_{j}|
\eea
where we used the decomposition of the identity $\one=|0\ket\bra0| + |0][0|$.  
On the other hand, given a spinor $|\alpha\ket$ we can construct a group element $g(\alpha) \equiv 
|0\ket\bra \alpha| + |0][\alpha|$, for which $ g^{\dagger}(\alpha) g(\alpha) = \bra \alpha | \alpha \ket$.
Any function of $ |\alpha\ket $ and its conjugate can be viewed as a function
$F(g(|\alpha\ket))$ of this element and therefore can be viewed, when restricted to unit spinors as a function  on SU$(2)$.
Lets now suppose that $F(g(|\alpha\ket))$ is homogeneous of degree $2J$ in $|\alpha \ket$, i.e.
$F(g(\lambda \alpha)) = \lambda^{2J}F( g(\alpha))$  for $\lambda >0$.  Then we can express the group integration as a Gaussian integral over spinors
\be \label{eqn_homo_integral}
 (J+1)! \int_{SU(2)}\rd g  F(g) = \frac{ 1 }{\pi^{2}} \int_{\C^{2}} \rd^{4} \alpha\, e^{-\bra \alpha | \alpha \ket} F(g(\alpha))
\ee
where $\rd g$ is the normalized Haar measure (for proof see the appendix).  Therefore ${\cal A}_{\Gamma}(z_{e}) $ can be written as a 
Gaussian integral
\bea
{\cal A}_{\Gamma}(z_{e}) 
&=& \int_{\C^{2|V_{\gamma}|}} \prod_{i \in V_{\Gamma}}\rd\mu(\alpha_{i}) 
\exp\left({\sum_{e \in E_{\Gamma}} [z_{e}| \left( |\alpha_{s(e)}\ket\bra \alpha_{t(e)}| + |\alpha_{s(e)}][\alpha_{t(e)}| \right) |z_{e^{-1}}\ket} \right) 
\eea
where $\rd\mu(\alpha) \equiv e^{-\bra\alpha |\alpha \ket } \rd^{4} \alpha / \pi^{2}$. 

Using the relation $[\alpha |w\ket [z|\beta] = - \bra \beta | z\ket [w|\alpha\ket$ we can write the integrand as 
$\exp\left(-{\sum_{i,j \in V_{\Gamma}}\bra \alpha_{i} | X_{ij} |\alpha_{j} \ket}\right) $
where the 2 by 2 matrix $X_{ij}$ is given by
\be 
X_{ij} = \sum_{e|s_{e}=i, t_{e}=j} |z_{e}\ket [z_{e^{-1}}| - \sum_{e|t_{e}=i, s_{e}=j} |z_{e^{-1}}\ket [{z}_{e}| 
\ee
and $X_{ij}$ vanishes if there is no edge between $i$ and $j$.  This Gaussian integral can be easily evaluated giving the determinant formula (\ref{det1}).
\end{proof}
We  now want to  evaluate this determinant explicitly.  To do this we require the following definitions.

\begin{definition} A loop of $\Gamma$ is a set of edges $l= e_{1}, \cdots e_{n}$ such that $ t_{e_{i}}= s_{e_{i+1}}$ and $ t_{e_{n}}= s_{e_{1}}$.  
A simple loop of $\Gamma$ is a loop in which $e_{i}\neq e_{j}$ for $i\neq j$, that is each edge enters at most once.
A non trivial cycle $c= (e_{1}, \cdots e_{n} )$ of $\Gamma$ is a simple  loop of $\Gamma$ in which  $s_{e_i} \neq s_{e_j}$ for $i \neq j$, 
i.e. it is a simple loop in which  each vertex is  traversed at most once.
A disjoint cycle union of $\Gamma$ is a collection $C=\{c_{1},\cdots, c_{k}\}$ of non trivial cycles of $\Gamma$ which are pairwise disjoint (i.e. do not have any common edges or vertices). Given a non trivial cycle $c= (e_{1}, \cdots ,e_{n})$ we define the quantity
\be
A_{c}(z_{e}) \equiv -(-1)^{|e|} [\tilde{z}_{e_{1}} | z_{e_{2}}\ket [\tilde{z}_{e_{2}}|z_{e_{3}}\ket \cdots  [\tilde{z}_{e_{n}} | z_{e_{1}}\ket
\ee
where 
$|e|$ is the number of edges of $c$ whose orientation agrees with the chosen orientation of  $\Gamma$,
and $\tilde{z}_{e}\equiv z_{e^{-1}}$.  Finally, given a disjoint cycle union $C=\{c_{1}\cdots c_{k}\}$ we define
\be \label{eqn_cycle_union_amp}
A_{C}(z_{e}) = A_{c_{1}}(z_{e})\cdots A_{c_{k}}(z_{e}).
\ee
\end{definition}
With these definitions we present the final expression for the vertex amplitude in the following theorem.
\begin{theorem} \label{thm_amp}
\be
{\cal A}_{\Gamma}(z_{e}) = \frac1{\left(1 + \sum_{C} A_{C}(z_{e})\right)^{2}}
\ee
where the sum is over all disjoint cycle unions $C$ of $\Gamma$.
\end{theorem}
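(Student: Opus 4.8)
The plan is to derive the theorem from the determinant formula (\ref{det1}) of the Lemma, ${\cal A}_{\Gamma}(z_{e}) = 1/\det(\one + X(z_{e}))$, by showing that the polynomial $\det(\one + X(z_{e}))$ is the perfect square $\bigl(1+\sum_{C} A_{C}(z_{e})\bigr)^{2}$, after which taking reciprocals gives the claim. The appearance of a square is structural --- it reflects the two-dimensionality of the spinor space --- and the cleanest route is to make it manifest by exhibiting $X$ as a product of two antisymmetric matrices and then recognising a Pfaffian.

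First I would record the elementary relation $[w| = |w\rangle^{T}\epsilon$, with $\epsilon = \bigl(\begin{smallmatrix}0&1\\-1&0\end{smallmatrix}\bigr)$, so that each block of $X$ in (\ref{eqn_matrix_X}) factorises as $X_{ij} = \tilde{X}_{ij}\,\epsilon$ with $\tilde{X}_{ij} \equiv \sum_{e|s_{e}=i,t_{e}=j}|z_{e}\rangle|z_{e^{-1}}\rangle^{T} - \sum_{e|t_{e}=i,s_{e}=j}|z_{e^{-1}}\rangle|z_{e}\rangle^{T}$. A one-line check gives $\tilde{X}_{ij}^{T} = -\tilde{X}_{ji}$, i.e. the $2|V_{\Gamma}|\times 2|V_{\Gamma}|$ matrix $\tilde{X}$ is antisymmetric. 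Setting $E \equiv \one_{|V_{\Gamma}|}\otimes\epsilon$, which satisfies $E^{2} = -\one$, $E^{T} = -E$ and $\det E = 1$, one has $\one + X = (E^{-1} + \tilde{X})E$, so that $\det(\one + X) = \det(\tilde{X} - E)$; since $\tilde{X} - E$ is antisymmetric, this equals $\mathrm{Pf}(\tilde{X} - E)^{2}$, and it remains to prove $\mathrm{Pf}(\tilde{X} - E) = \pm\bigl(1+\sum_{C} A_{C}(z_{e})\bigr)$. (Alternatively one could use $\det(\one + X) = \det(\one + CB)$ for a factorisation $X = BC$ indexed by oriented edges and expand the resulting scalar determinant over directed-cycle covers, but then the squared structure is less transparent.)

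To evaluate the Pfaffian I would expand it as a sum over perfect matchings $\pi$ of the index set $\{(v,a) : v\in V_{\Gamma},\, a\in\{1,2\}\}$, each $\pi$ contributing $\mathrm{sgn}(\pi)\prod_{\{(i,a),(j,b)\}\in\pi}\bigl((\tilde{X}_{ij})_{ab} - \delta_{ij}\epsilon_{ab}\bigr)$. A pair $\{(i,1),(i,2)\}$ contributes the factor $-1$ when $i$ carries no self-loop (self-loops are cycles of length one and enter through the diagonal blocks $\tilde{X}_{ii}$; they are handled the same way). If, for a vertex, the two copies are matched to copies of other vertices, then necessarily both of them are, so on the set of such vertices the cross-pairs of $\pi$ form a $2$-regular multigraph, i.e. a disjoint union of cycles. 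Summing the spinor index $a$ at each vertex of such a cycle combines, by the antisymmetry of $[\cdot|\cdot\rangle$, the two spinor components meeting there into a single bracket, so that the cycle produces the chain $[\tilde{z}_{e_{1}}|z_{e_{2}}\rangle\cdots[\tilde{z}_{e_{n}}|z_{e_{1}}\rangle$, i.e. $A_{c}(z_{e})$ up to sign, while configurations that reuse one edge of $\Gamma$ twice cancel in pairs (equivalently, $\det\tilde{X}_{ij}$ involves only distinct pairs of parallel edges). Collecting everything, each disjoint cycle union $C$ yields $A_{C}(z_{e})$ and the all-self-matched configuration yields the constant $1$, all with a common overall sign, giving $\mathrm{Pf}(\tilde{X} - E) = \pm\bigl(1+\sum_{C} A_{C}(z_{e})\bigr)$ and hence the theorem.

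The principal difficulty is the sign bookkeeping in this last step: one must check that $\mathrm{sgn}(\pi)$, the signs produced when summing the internal spinor indices, and the relative minus signs from the second sum in $\tilde{X}_{ij}$ (one per cycle edge traversed against its chosen orientation, which is what generates the $(-1)^{|e|}$ in $A_{c}$) all combine into the \emph{same} overall sign for every $C$ --- concretely, that a single cycle of any length contributes the net factor $-1$ relative to the all-self-matched matching, so that the prefactor $-(-1)^{|e|}$ in the definition of $A_{c}$ is reproduced exactly. One must also verify that each geometric cycle is counted once, not once per orientation: the two orientations of a cycle correspond to two distinct matchings whose weights add up to a single copy of $A_{c}(z_{e})$, as one already sees for a bigon.
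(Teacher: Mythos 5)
Your route is correct in outline but genuinely different from the paper's. The paper's own proof of Theorem \ref{thm_amp} stays with the $2|V_\Gamma|\times 2|V_\Gamma|$ block determinant and introduces a ``loop determinant'' for what it calls scalar loop matrices: it shows via the identity $|z\ket[w|-|w\ket[z|=-[z|w\ket\one$ that every symmetrized loop weight of $\one+X$ is a scalar, proves that this loop determinant is invariant under elementary row operations (an eight-cycle-cover cancellation), and then establishes $|A|=|\mathrm{Ldet}(A)|$ by induction on Schur complements; the square appears because $\mathrm{Ldet}$ is a multiple of the $2\times 2$ identity. You instead make the square manifest from the start by peeling off $\epsilon$ to get $\det(\one+X)=\det(\tilde{X}-E)=\mathrm{Pf}(\tilde{X}-E)^{2}$ and expanding the Pfaffian over perfect matchings. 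This is close in spirit to the paper's \emph{second} proof (Theorem \ref{thm_gen} plus Corollary \ref{cor_gen}), which also reduces to a Pfaffian --- but of the edge-indexed $2|E_\Gamma|\times 2|E_\Gamma|$ matrix $E-T^{\Gamma}$, where simple loops can revisit vertices and the Pl\"ucker relations are then needed to cancel the self-intersecting ones. Your vertex-indexed Pfaffian buys exactly that cancellation for free: each vertex carries only two index copies, so the cross-pairs of a matching can visit it at most once and only disjoint cycle unions ever appear. What your sketch still owes is the sign bookkeeping you flag at the end, and that is real but standard work: the antisymmetrization over the two copy assignments at each vertex is the same spinor identity the paper uses for the scalar-loop property, and one should note that for a cycle of length $n\geq 3$ the matching (a set of unordered pairs) does not itself carry an orientation --- the $2^{n}$ copy assignments sum to a single product of brackets, and it is the orientation-independence of $A_{c}$ itself (the flip of every bracket under reversal compensated by the flip of $(-1)^{|e|}$) that makes this consistent with the paper's convention of counting each undirected cycle once; bigons are the special case where the two compatible matchings combine into $\det\tilde{X}_{ij}$, killing the repeated-edge terms.
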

The proof of this result is detailed in the appendix, and is due to the following special property of the matrix $X$.
\begin{proposition}
The  Matrix  $X$ defined in Eq. (\ref{eqn_matrix_X}) is what we call a scalar loop matrix.  That is for any collection of indices $L = (i_{1},\cdots, i_{n})$ of $\{1,2,...,n\}$ where $n$ is the size of $X$ the quantity
\be
\frac{1}{2}\left(X_{i_{1}i_{2}}X_{i_{2}i_{3}}\cdots X_{i_{n}i_{1}} +  X_{i_{1}i_{n}}X_{i_{n}i_{n-1}}\cdots X_{i_{2}i_{1}} \right) = X_{L} \one
\ee
is proportional to the identity. 
\end{proposition}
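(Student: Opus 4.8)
The plan is to reduce the statement entirely to two elementary facts about the adjugate (cofactor) matrix $\mathrm{adj}(M)$ of a $2\times 2$ matrix $M$: that it is \emph{additive}, equivalently $\mathrm{adj}(M)=(\tr M)\,\one - M$, and that it is \emph{anti-multiplicative}, $\mathrm{adj}(AB)=\mathrm{adj}(B)\,\mathrm{adj}(A)$. Both are one-line checks in coordinates, and the first is a genuine $2\times 2$ accident (it is the Cayley--Hamilton relation in this dimension). The single spinor-level input needed is the rank-one identity $\mathrm{adj}\big(|a\ket[b|\big) = -\,|b\ket[a|$, which, since $|a\ket[b|$ has vanishing determinant and trace $[b|a\ket$, is equivalent to the Schouten identity $[b|a\ket\,\one = |a\ket[b| - |b\ket[a|$ in $\C^{2}$; this follows by evaluating both sides on an arbitrary spinor $|c\ket$ and using $[a|b\ket\,|c\ket + [b|c\ket\,|a\ket + [c|a\ket\,|b\ket = 0$.

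Next I would use this, together with the additivity of $\mathrm{adj}$, to establish the block relation
\be
X_{ji} = \mathrm{adj}\big(X_{ij}\big) \qquad \text{for all } i,j .
\ee
Indeed, in $X_{ij}$ as defined in (\ref{eqn_matrix_X}), an edge $e$ with $s_{e}=i,\,t_{e}=j$ contributes the rank-one term $|z_{e}\ket[z_{e^{-1}}|$, while the same edge contributes $-|z_{e^{-1}}\ket[z_{e}| = \mathrm{adj}\big(|z_{e}\ket[z_{e^{-1}}|\big)$ to $X_{ji}$; and an edge with $t_{e}=i,\,s_{e}=j$ contributes $-|z_{e^{-1}}\ket[z_{e}|$ to $X_{ij}$ and $|z_{e}\ket[z_{e^{-1}}| = \mathrm{adj}\big(-|z_{e^{-1}}\ket[z_{e}|\big)$ to $X_{ji}$. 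Summing over all edges joining $i$ and $j$ and invoking additivity of the adjugate yields the claim. (Diagonal blocks $X_{ii}$ receive contributions only from self-loops, each already proportional to $\one$ by the Schouten identity, so the relation is trivial there; likewise if there is no edge between $i$ and $j$ both sides vanish.)

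Finally, set $M \equiv X_{i_{1}i_{2}}X_{i_{2}i_{3}}\cdots X_{i_{n}i_{1}}$. Using $X_{ji}=\mathrm{adj}(X_{ij})$ and anti-multiplicativity of $\mathrm{adj}$,
\bea
X_{i_{1}i_{n}}X_{i_{n}i_{n-1}}\cdots X_{i_{2}i_{1}}
&=& \mathrm{adj}(X_{i_{n}i_{1}})\,\mathrm{adj}(X_{i_{n-1}i_{n}})\cdots \mathrm{adj}(X_{i_{1}i_{2}}) \nn\\
&=& \mathrm{adj}\big(X_{i_{1}i_{2}}X_{i_{2}i_{3}}\cdots X_{i_{n}i_{1}}\big) \;=\; \mathrm{adj}(M),
\eea
so that $\tfrac12\big(M + \mathrm{adj}(M)\big) = \tfrac12(\tr M)\,\one$, which is manifestly proportional to the identity, with $X_{L} = \tfrac12\,\tr\big(X_{i_{1}i_{2}}X_{i_{2}i_{3}}\cdots X_{i_{n}i_{1}}\big)$. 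Note that the argument nowhere uses that the $i_{k}$ are distinct, so it covers an arbitrary index tuple $L$.

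The only genuinely delicate step — the one I would take most care over — is the block identity $X_{ji}=\mathrm{adj}(X_{ij})$: one must correctly bookkeep the two orientation classes of edges between $i$ and $j$ together with the signs in (\ref{eqn_matrix_X}), and one must genuinely exploit additivity of the adjugate, which holds only for $2\times 2$ matrices; the rest is Cayley--Hamilton. It is also worth recording, for the appendix proof of Theorem~\ref{thm_amp}, that $\tr M$ — hence $X_{L}$ — expands into closed contractions of the spinor brackets along the index loop $i_{1}\to i_{2}\to\cdots\to i_{n}\to i_{1}$, which is precisely what ties this proposition to the cycle sum $\sum_{C}A_{C}(z_{e})$ appearing in the theorem.
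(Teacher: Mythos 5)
Your proof is correct, and it takes a genuinely different route from the one in the appendix. The paper proves the statement by brute force on the weights: it first assumes one edge per pair of vertices so that each block $X_{l_jl_k}$ is rank one, writes out $W(L)$ and $W(L^{-1})$ as explicit chains of spinor brackets, commutes the scalar factors $[z|w\ket$ through, and applies the single $2\times 2$ identity $|z\ket[w|-|w\ket[z|=-[z|w\ket\,\one$ to the two leftover endpoint factors; the multi-edge case is then dispatched by a remark that the same cancellation happens term by term after expanding the product of sums. Your argument packages that same $2\times 2$ accident as Cayley--Hamilton, $M+\mathrm{adj}(M)=(\mathrm{Tr}\,M)\,\one$, and pushes all the combinatorial bookkeeping into the single block identity $X_{ji}=\mathrm{adj}(X_{ij})$, after which the reversed loop product is $\mathrm{adj}(M)$ by anti-multiplicativity and the claim is immediate. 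What your version buys is that the multi-edge case and the diagonal (self-loop) blocks are handled automatically by additivity of the $2\times 2$ adjugate rather than by a separate expansion argument, and that repeated indices in $L$ cost nothing; what the paper's version buys is the explicit closed formula $S(L)=\tfrac{(-1)^{|e|}}{2}[z_{l_1l_i}|z_{l_1l_2}\ket\cdots[z_{l_il_{i-1}}|z_{l_il_1}\ket\,\one$ for the scalar, which is then needed to match signs and weights against $A_C(z_e)$ in the proof of Theorem \ref{thm_amp} --- your closing remark that $X_L=\tfrac12\mathrm{Tr}(M)$ expands into exactly these closed bracket contractions supplies that missing piece, so nothing is lost.
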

This property allows us to prove the following lemma from which the theorem follows:
\begin{lemma}
If $X$ is a  $n\times n$ scalar loop matrix composed of 2 by 2 block matrices then 
\be
\det(X) = \left(\sum_{C} \sgn(C) X_{i_1} \cdots X_{i_k} \right)^{2},
\ee
where the sum is over all collections of pairwise disjoint cycles $C = (i_1, \cdots, i_k)$ of $\{1,\cdots, n\}$ which cover $\{1,2,...,n\}$, and $\sgn(C)$ is the signature of $C$ viewed as a permutation of $(1,\cdots, n)$.
\end{lemma}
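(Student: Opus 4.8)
The plan is to express $\det X$ entirely through the power sums $\tr(X^{m})$ and to evaluate those with the scalar loop property. Writing $X$ as an ordinary $2n\times 2n$ matrix and introducing a formal parameter $t$, one has $\log\det(\one+tX)=\tr\log(\one+tX)=\sum_{m\ge1}\frac{(-1)^{m-1}}{m}t^{m}\,\tr(X^{m})$, and grouping the $2n$ indices into the $n$ blocks gives $\tr(X^{m})=\sum_{L}\tr_{2}\!\big(X_{i_{1}i_{2}}X_{i_{2}i_{3}}\cdots X_{i_{m}i_{1}}\big)$, the sum running over all closed walks $L=(i_{1},\dots,i_{m})$ on $\{1,\dots,n\}$ and $\tr_{2}$ the trace on the $2$-dimensional block index.

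The first key step is purely formal. Apply the defining identity of a scalar loop matrix to every closed walk $L$, take $\tr_{2}$, sum over $L$, and observe that summing $\tr_{2}$ of the ``reversed'' product over all $L$ is, after relabelling $L\mapsto L^{-1}$ and using cyclicity of $\tr_{2}$, again $\tr(X^{m})$. This yields
\be
\tr(X^{m})=2\sum_{L:\,|L|=m}X_{L},
\ee
with $X_{L}$ the scalar attached to the loop $L$. Hence $\log\det(\one+tX)$ is twice a power series, so $\det(\one+tX)=R(t)^{2}$ with $R(t)=\exp\!\big(\tfrac12\sum_{m}\tfrac{(-1)^{m-1}}{m}t^{m}\tr X^{m}\big)$; since the left side is a polynomial of degree $2n$, $R(t)$ is a polynomial of degree $n$. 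This is where the perfect square originates: it is a direct shadow of the $2\times2$ block structure together with the Proposition.

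The second, and harder, step is to identify $R(t)$ with the cycle-union generating polynomial $\sum_{C}\sgn(C)\,t^{|C|}X_{C}$, where $C$ runs over collections of pairwise disjoint cycles of $\{1,\dots,n\}$, $X_{C}$ is the product of the loop scalars $X_{c}$ of its cycles (both orientations of a cycle giving the same $X_{c}$, consistently with $\sgn$), and $|C|$ is the number of vertices covered. The natural tool is the exponential formula: the ``connected'' contributions in $\log R(t)=\tfrac12\sum_{m}\tfrac{(-1)^{m-1}}{m}t^{m}\tr X^{m}$ are the simple cycles, each appearing with weight $\sgn(c)X_{c}$ once cyclic rotations are accounted for. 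One must then show that a \emph{self-intersecting} closed walk contributes exactly the product of the $X_{c}$ over a decomposition of the walk into vertex-disjoint simple cycles. Here the scalar loop property is used a second time, in its full strength for non-simple sequences: the block product around a simple loop is not itself scalar, but its symmetrisation is, so it behaves like a central element, and the non-commuting cross terms produced when a walk revisits a vertex must be shown to cancel against the over-counting supplied by the $\tfrac{(-1)^{m-1}}{m}$ weights and the cyclic multiplicities. Granting this, $\det(\one+tX)=\big(\sum_{C}\sgn(C)t^{|C|}X_{C}\big)^{2}$, and reading off the coefficient of $t^{2n}$ restricts the right side to the cycle covers of $\{1,\dots,n\}$, which is the claimed formula for $\det X$.

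The main obstacle is precisely this reassembly of self-intersecting walks into disjoint simple cycles; it is a combinatorial repackaging in which inclusion--exclusion and the sequence-level scalar loop identity must be combined carefully. An alternative, perhaps more economical route that avoids the exponential formula is induction on $n$: assuming $X_{11}=a\,\one$ is invertible, one checks directly from the Proposition (using that $X_{11}^{-1}$ is central) that the Schur complement $\hat X_{ij}=X_{ij}-X_{i1}X_{11}^{-1}X_{1j}$ is again a scalar loop matrix on the remaining $n-1$ blocks, so $\det X=a^{2}\det\hat X$ and the inductive hypothesis applies; the work is then to match $a\sum_{C'}\sgn(C')\hat X_{C'}$ against $\sum_{C}\sgn(C)X_{C}$, i.e.\ to show that the ``multiple detours through vertex $1$'' manufactured by the Schur complement recombine, again by the scalar loop identity, into the single visit of vertex $1$ allowed in a cycle cover. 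Either way the $2\times2$ block structure is what produces the square, while the arithmetic of the loop scalars is what produces the cycle sum.
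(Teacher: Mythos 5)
Your first step is correct and is a genuinely different (and rather elegant) way to see where the square comes from: since the scalar loop property applies to arbitrary index sequences, $\tr(X^m)=2\sum_{L}X_L$ over closed walks $L$, so $\log\det(\one+tX)$ is twice a power series and $\det(\one+tX)=R(t)^2$. (The paper instead obtains the square only at the very end, from the fact that its ``loop determinant'' is a scalar multiple of the $2\times 2$ identity.) But the proof has a genuine gap exactly where you flag it and then write ``Granting this'': the identification of $R(t)$ with $\sum_C\sgn(C)t^{|C|}X_C$ is the entire content of the lemma, and it does not follow from the exponential formula by routine bookkeeping. The obstruction is that $X_L$ for a self-intersecting walk is \emph{not} multiplicative over its simple constituents: for $L=(1,2,3,1,4,5)$ one has $X_L=\tfrac12\bigl(W(c_1)W(c_2)+W(c_2^{-1})W(c_1^{-1})\bigr)$ while $X_{c_1}X_{c_2}=\tfrac14\bigl(W(c_1)+W(c_1^{-1})\bigr)\bigl(W(c_2)+W(c_2^{-1})\bigr)$, and these differ because the individual $W(c_i)$ are not scalar. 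So the ``connected pieces'' in your $\log R(t)$ are not simply the simple cycles with weight $\sgn(c)X_c$, and the required cancellation of the cross terms among the several walks traversing the same edge multiset in different orientations is precisely the hard combinatorial identity. (A secondary slip: $R(t)^2$ being a polynomial does not by itself force $R(t)$ to be a polynomial; that only follows once $R$ is identified with the cycle polynomial.)

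The paper supplies exactly this missing cancellation by a different mechanism: it defines the loop determinant $\mathrm{Ldet}(A)=\sum_{\cC}\sgn(\cC)S(\cC)$ over cycle covers, proves (via an explicit cancellation among eight related cycle covers) that $\mathrm{Ldet}$ is invariant under adding a scalar multiple of one row to another, and then runs Gaussian elimination: the Schur complement $B_{ij}=A_{ij}-A_{i1}A_{11}^{-1}A_{1j}$ is again a scalar loop matrix, so $|A|=|A_{11}|\cdot|B|$ and induction closes the argument. Your ``alternative, more economical route'' is in fact this proof, but your sketch again stops at the decisive point (``the multiple detours through vertex $1$ \ldots recombine''): that recombination is the eight-cover cancellation, and it is where the scalar loop property is used in full strength. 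Either of your two routes could presumably be completed, but as written neither contains the argument that eliminates the non-disjoint and self-intersecting configurations, which is the substance of the lemma.
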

Evaluating this sum leads to our main theorem.

\subsection{Illustration}
Let us first illustrate this theorem on one of the simplest graphs: the theta graph $\Theta_{n}$.  This graph consists of two vertices with $n$ edges running between them.  
The amplitude for this graph depends on $2n$ spinors denoted $z_{i}$ for the spinors attached to the first vertex and $w_{i}$ for the ones attached to the second vertex.  
The orientation of all the edges is directed from $z_{i}$ to $w_{i}$ where $i=1,\cdots, n$ labels the edges of $\Theta_{n}$.
For this graph the only cycles which have non-zero amplitudes are of length 2.  Further, since there are only two vertices, each disjoint cycle union
consists of a single nontrivial cycle. The amplitude associated to such a cycle going along the edge $i$ and then $j$ 
is given by
\be
A_{ij} = [w_{i}|w_{j}\ket[z_{j}|z_{i}\ket
\ee
Therefore, from our general formula we have
\be\label{theta}
{\cal A}_{\Theta_{n}}(z_{i},w_{i}) = \left(1 + \sum_{i<j}[w_{i}|w_{j}\ket[z_{j}|z_{i}\ket \right)^{-2}.
\ee

We now illustrate the theorem for cases of the 3-simplex and the 4-simplex. 
In a $n$-simplex there is exactly one oriented edge for any pair of vertices $e=[ij]$ and so we can label cycles by sequences of vertices.  We choose the 
orientation of the simplex to be such that positively oriented edges are given by $e=[ij]$ for $i<j$.
Associated to the oriented edge $e=[ij]$ we assign the spinors $$z_{e} \equiv z^{i}_{j},\qquad \tilde{z}_{e}=
z_{e^{-1}} \equiv z^{j}_{i}.$$ 
Given a non trivial cycle $(1,2, \dots, p)$ of a $n$-simplex we define its amplitude by
\be
A_{12\cdots p}\equiv   [z^{1}_{p}|z_{2}^{1}\ket[z_{1}^{2}|z_{3}^{2}\ket \cdots [z_{p-1}^{p}|z_{1}^{p}\ket.
\ee
For the 3-simplex we have four non-trivial cycles of length $3$ and three non-trivial cycles of length $4$.  Since each of these cycles share a vertex or edge with every other, the only disjoint cycle unions are those which contain one non-trivial cycle.  Therefore, after taking into account the sign convention the 3-simplex amplitude is given by
\be
{\cal A}_{3S}= \bigg(1 - A_{123} - A_{124} - A_{134} - A_{234} + A_{1234} - A_{1243} - A_{1324} \bigg)^{-2}.
\ee
The sign in front of $A_{123}$ is determined in the following way.  First, there is one $-1$ which comes from the cycle union having one non trivial cycle and two $-1$ because the non trivial cycle $(1,2,3)$ contains the two edges $12$ and $23$ which have a positive orientation.  Thus the sign is negative.

For the 4-simplex we have ten 3-cycles, fifteen 4 cycles, and twelve 5 cycles and again the disjoint cycle unions consist of only single cycles.
We define the 3-cycle amplitude to be
\be
A_{3}\equiv A_{123} + A_{124}+ A_{134} + A_{234} + A_{125}  +A_{135}+ A_{345} + A_{145} + A_{245} + A_{345},
\ee
the 4-cycle amplitude to be
\be
A_{4}\equiv \hat{A}_{1234} + \hat{A}_{1235} + \hat{A}_{1245} + \hat{A}_{1345} + \hat{A}_{2345},
\quad \mathrm{
with
}\quad 
\hat{A}_{1234}= A_{1234}- A_{1324}- A_{1243}.
\ee
and the 5-cycle amplitude to be
\bea\nonumber
A_{5} &=&  A_{12345} - A_{12435} - A_{23541}- A_{34152} - A_{45213}- A_{51324}\\
& &
- A_{12453} - A_{23514} -A_{34125} - A_{45231}- A_{51342} - A_{13524}.
\eea
Finally, the 4-simplex amplitude is given by
\be
{\cal A}_{4S} = ( 1 - A_{3} + A_{4} - A_{5})^{-2}.
\ee

\section{ Intertwiners and The vertex Amplitude}

The goal of this section is to understand more deeply the 
relationship between the coherent evaluation of  3 and 4-valent graphs like the 3 and 4-simplex and the usual 
evaluation of spin network.

In order to express the coherent evaluation ${\cal A}_{3S}$ and ${\cal A}_{4S}$,  in terms of the 6j  and 15j symbols respectively,
we need to know the relationship between the coherent intertwiner and the normalised 3j symbol.
This relationship is  well known for 3-valent intertwiners \cite{Bargmann,holomorph,UN2,livine-bonzom}, however we will give an independent 
and elegant derivation that will allow us to understand this relationship in the unknown 4-valent case ( for an exception see \cite{UN2}).

\subsection{The n-valent intertwiner}

It is well-known \cite{ Schwinger, Bargmann} that the spin $j$ 
representation can be understood in terms of holomorphic functions on spinor space $\C^{2}$ which are homogeneous
 of degree $2j$.  In this formulation a holomorphic and orthonormal basis corresponding to the diagonalisation of $J_{3}$ is given by
\be
   e^{j}_{m}(z) = \frac{\alpha^{j+m}\beta^{j-m}}{\sqrt{(j+m)!(j-m)!}} 
\ee
where $(\alpha, \beta)$ are the components of the spinor $|z\ket$.
This basis is orthonormal with respect to the Gaussian measure
\be
  d\mu(z) = \frac{1}{\pi^2} e^{-\bra z | z \ket}  \rd^{4}z
\ee
and $ \rd^{4}z$ is the Lebesgue measure on $\C^2$.  In fact these basis elements are the bracket between the usual states and the coherent states
$$e^{j}_{m}(z) = \bra j,m | z\ket $$.
In this representation it is straightforward to construct a basis of  $n$-valent intertwiners, i.e. functions of $z_{1},\cdots ,z_{n}$ which are invariant under $\SL(2,\C)$ and homogeneous of degree $2j_{i}$ in $z_{i}$.
A complete basis of these intertwiners is labeled by  $n(n-1)/2$ integers $[k]\equiv (k_{ij})_{i\neq j = 1,\cdots, n}$ with $k_{ij}=k_{ji}$ and given by
\be\label{C}
C_{[k]}^{(n)}(z_{i}) \equiv (-1)^{s_{n}} \prod_{i<j}\frac{ [z_{i}|z_{j}\ket^{k_{ij}}}{k_{ij}!}.
\ee
where the sign factor $s_{n}$ is chosen for convenience\footnote{For instance in the trivalent case we take $s_{3}=k_{31}$ so that the ordering correspond to the cyclic ordering with
$z_{12},z_{23},z_{31}$ instead of $z_{12},z_{23},z_{13}$.}.
By homogeneity the integers $[k]$ must satisfy the conditions
\be\label{kj}
\sum_{j\neq i} k_{ij} =2j_{i}
\ee
and when these conditions are satisfied we write $[k]\in K_{j}$.

We now would like to understand the relationship between this basis of intertwiners and the coherent intertwiners, and in particular the scalar product between these states.
In order to investigate this, let us introduce the normalised intertwiner basis
\be\label{Chat}
\widehat{C}_{[k]}^{(n)}(z_{i}) \equiv  \frac{ \prod_{i<j} [z_{i}|z_{j}\ket^{k_{ij}}}{\sqrt{ (J+1)! \prod_{i<j} k_{ij}!}} 
= \sqrt{\frac{ \prod_{i<j} k_{ij}!}{(J+1)!}} C_{[k]}^{(n)}. 
\ee

Intuitively, the theta graph consists of two $n$-valent intertwiners with pairs of legs identified.  Indeed, expanding the theta graph amplitude (\ref{theta}) in a power series yields an expression in terms of these intertwiners
\bea
{\cal A}_{\Theta_{n}}(z_{i},w_{i}) &=& \sum_{J} (-1)^J (J+1) \left(\sum_{i<j}  [w_{i}|w_{j}\ket[z_{j}|z_{i}\ket\right)^{J} \\
&=& \sum_{[k]}     {(J+1)!} \frac{
\prod_{i<j}  [w_{i}|w_{j}\ket^{k_{ij}} [z_{i}|z_{j}\ket^{k_{ij}}
}{\prod_{i<j} k_{ij}!}
\\ 
&=& \sum_{j_{i}} \left[(J+1)! \right]^{2}  \sum_{[k] \in K_{j}}\widehat{C}_{[k]}^{(n)}(z_{i})\widehat{C}_{[k]}^{(n)}(w_{i}).
\eea
This shows that ${\cal A}_{\Theta_{n}}(z_{i},w_{i})$ is a generating functional for the $n$-valent intertwiners.
Given the definition (\ref{def2}) of the amplitude ${\cal A}_{\Theta_{n}}(z_{i},w_{i})$ in terms of coherent intertwiners, this implies that 
\be\label{CC}
  \sum_{[k] \in K_{j}}\widehat{C}_{[k]}^{(n)}(z_{i})\widehat{C}_{[k]}^{(n)}(w_{i}) =\int \rd g  \prod_{i}\frac{[z_{i}|g|w_{i}\ket^{2j_{i}} }{(2j_{i})!}.
\ee
This shows that the relation between  the coherent intertwiner $\|j_{i}, z_{i} \ket$ and the  normalised $n$-valent intertwiner $\widehat{C}_{[k]}$ is given by
\be\label{conor}
\frac{\|j_{i}, z_{i} \ket}{\sqrt{ \prod_{i} (2j_{i})!}} = \sum_{[k] \in K_{j}} \left|\widehat{C}_{[k]}^{(n)}\right\ket\widehat{C}_{[k]}^{(n)}(z_{i})
\ee
where we have introduce the state $ \left\bra \left. \widehat{C}_{[k]}^{(n)}\right| z_{i}\right\ket \equiv \widehat{C}_{[k]}^{(n)}(z_{i})$.

We now have to understand the normalization properties of  $\widehat{C}_{[k]}^{(n)}$.
In order to do so, it is convenient to introduce another generating functional defined by
\be\label{ACC}
\widehat{\cal A}_{\Theta_{n}}(z_{i},w_{i}) \equiv  \sum_{[k]}   \widehat{C}_{[k]}^{(n)}(z_{i})\widehat{C}_{[k]}^{(n)}(w_{i}).
\ee
The remarkable fact about this generating functional, which follows from (\ref{CC}), is that it can be written as the evaluation of the following integral
\be
\widehat{\cal A}_{\Theta_{n}}(z_{i},w_{i}) =\int_{\SU(2)} \rd g \, e^{\sum_{i}[z_{i}|g|w_{i}\ket} \, .
\ee
We can now compute 
\bea
\int \prod_{i}\rd\mu(w_{i}) \left|\widehat{\cal A}_{\Theta_{n}}(z_{i},w_{i})\right|^{2} &=& \int \rd g \rd h 
\int \prod_{i}\rd\mu(w_{i}) e^{\sum_{i}[z_{i}|g|w_{i}\ket + \sum_{i}\bra w_{i}|h^{-1}|z_{i} ]}\\
&=& \int \rd g \rd h \,e^{\sum_{i}[z_{i}|gh^{-1}|z_{i} ]} 
= \widehat{\cal A}_{\Theta_{n}}(z_{i}, \check{z}_{i})
\eea
where $ |\check{z}_{i}\ket \equiv |z_{i}]$ and in the second line we performed the Gaussian integral.

Using (\ref{ACC}) to write this equality in terms of the intertwiner basis we get
\be
\sum_{[k],[k']} \widehat{C}_{[k']}^{(n)}(z) \left\bra \widehat{C}_{[k']}^{(n)} \right|\left. \widehat{C}_{[k]}^{(n)} \right\ket  \widehat{C}_{[k]}^{(n)}(\check{z}_{i})
=   \sum_{[k]}   \widehat{C}_{[k]}^{(n)}(z_{i})\widehat{C}_{[k]}^{(n)}(\check{z}_{i})
\ee
where we have used that ${C}_{[k]}^{(n)}(\check{z}_{i})$ is the complex conjugate\footnote{ We have 
$[\check{w}|\check{z}\ket=-\bra w | z ]= \bra z|w]= \overline{[w|z\ket}.$} of $C_{[k]}^{(n)}({z}_{i})$.
This shows that the combination 
\be\label{proj}
P _{j} \equiv  \sum_{[k]\in K_{j} }   \left| \widehat{C}_{[k]}^{(n)}\right\ket \left\bra \widehat{C}_{[k]}^{(n)}\right| 
\ee
is a projector onto the space of SU$(2)$ intertwiners of spin $j_{i}$.

In the case $n=3$ there is only one intertwiner.  Indeed, given $[k]=(k_{12},k_{23},k_{31})$ the  homogeneity restriction requires $2j_{1}=k_{12} + k_{13}$ which can be easily solved by
\be
k_{ij} = J - 2j_{i},\qquad J\equiv j_{1}+j_{2}+j_{3}.
\ee
In this case the fact that $P_{j}$ is a projector implies that $C_{[k]}^{(3)}$ form an orthonormal basis, 
$ \left\bra \widehat{C}_{[k]}^{(3)} | \widehat{C}_{[k']}^{(3)} \right\ket = \delta_{[k],[k']}$.
In other word we can write 
\be
\widehat{C}_{[k]}^{(3)}(z_{i}) = \sum_{m_{i}} 
\left(
\begin{array}{ccc}
j_{1} & j_{2} & j_{3} \\
m_{1} & m_{2} & m_{3}
\end{array}
\right) e^{j_{1}}_{m_{1}}(z_{1})e^{j_{2}}_{m_{2}}(z_{2})e^{j_{3}}_{m_{3}}(z_{3})
\ee
where the coefficients are the Wigner 3$j$ symbols.

Using the relationship (\ref{conor}) between the normalised and coherent intertwiners and the definition (\ref{def3}) of the amplitude in terms of coherent intertwiners we can evaluate the 3-simplex amplitude in terms of the 6$j$ symbol as
\be
{\cal A}_{3S}(z_{j}^{i}) =\sum_{j_{ij}} \prod_{i} (J_{i}+1)! (-1)^{s}\prod_{i} \widehat{C}_{j_{ij}}(z^{i}_{j})  \left\{ 6j \right\}.
\ee
Here $s=j_{12}+j_{13}$ and this signs comes from the fact that the oriented graph for the 6$j$ symbol differs from the generic orientation we have chosen by a change of order of the edge $12$ and $23$ (see e.g. \cite{LJ} for the definition of the 6$j$).
Note that it is 
also interesting to consider the amplitude
\be
\widehat{\cal A}_{3S}(z_{j}^{i}) \equiv \sum_{j_{ij}}  \prod_{i} \widehat{C}_{j_{ij}}(z^{i}_{j})  \left\{ 6j \right\}=
\int \prod_{i} \rd g_{i} e^{\sum_{i<j} [z^{i}_{j}|g_{i}g_{j}^{-1}|z_{i}^{j}\ket}
\ee
although this amplitude cannot be evaluated exactly, unlike $ \cal A$.  This amplitude does however possess interesting asymptotic properties.
\section{Generating Functionals}

We would like now to provide a  direct evaluation of the scalar product between two intertwiners.
In order to do so 
we introduce the following generating functional which depends holomorphically on $n$ spinors $|z_{i}\ket$ and $n(n-1)/2$ complex numbers
$\tau_{ij} =-\tau_{ji}$
\be\label{defC}
 {\cal C}_{\tau_{ij}}(z_{i})  \equiv e^{\sum_{i<j} \tau_{ij}[z_{i}|z_{j}\ket } = \sum_{[k] } 
 \prod_{i<j} \tau_{ij}^{k_{ij}} C_{[k]}(z_{i}).
\ee
This functional was first consider by Schwinger \cite{Schwinger}.
We now compute the scalar product between two such intertwiners
\bea\label{CC2}
\left\bra {\cal C}_{\tau_{ij}} | {\cal C}_{\tau_{ij}} \right\ket & = & \int \prod_{i}\rd\mu(z_{i})   \left|{\cal C}_{\tau_{ij}}(z_{i})\right|^{2}\\
& = &  \int \prod_{i}\rd\mu(z_{i}) e^{\sum_{i<j} \tau_{ij}[z_{i}|z_{j}\ket + \bar{\tau}_{ij} \bra z_{j}|z_{i}]}.
\eea
If we denote by $\alpha_{i}\in \C$ and $\beta_{i} \in \C$ the two components of the spinor $z_{i}$, 
and use that $[z_{i}|z_{j}\ket = \alpha_{i}\beta_{j} - \alpha_{j} \beta_{i}$ together with the antisymmetry of $\tau_{ij}$, this integral reads 
\be
\int \prod_{i}\rd\mu(\alpha_{i})\rd\mu(\beta_{i}) e^{\sum_{i,j}( \tau_{ij} \alpha_{i}\beta_{j} +\bar{\tau}_{ij}\bar{\alpha}_{i}\bar{\beta}_{j})}
\ee
with $\rd \mu(\alpha)=e^{-| \alpha |^{2}} \rd \alpha/\pi $.
We can easily integrate over $\beta_{j}$, since the integrand is linear in $\beta_{j}$ and we obtain:
\be
\int \prod_{i}\rd\mu(\alpha_{i}) e^{\sum_{i,j,k}\alpha_{i} \tau_{ij} \bar{\tau}_{kj} \bar{\alpha}_{k}}
= \frac{1}{\det(1 + T\overline{T})}
\ee
where $T = (\tau_{ij})$ and $\overline{T} = (\overline{\tau}_{ij})$.  In the case where $n=3$ this determinant can be explicitly evaluated and it is given by
\be
\det(1 + T\overline{T}) = \left(1-\sum_{i<j} |\tau_{ij}|^{2} \right)^{2}
\ee
In the case $n=4$  the explicit evaluation  gives 
\be\label{det4}
\det(1 + T\overline{T}) = \left(1-\sum_{i<j} |\tau_{ij}|^{2} +  |R|^{2}\right)^{2}
\ee
where 
\be
R(\tau)= \tau_{12}\tau_{34} +  \tau_{13} \tau_{42}+ \tau_{14}\tau_{23}.
\ee
Note that the Pl\"ucker identity tells us that $R=0$ when $\tau_{ij} =[z_{i}|z_{j}\ket$.  

By expanding the LHS of (\ref{CC2}) for $n=4$
\bea
 \left\bra {\cal C}_{\tau_{ij}} | {\cal C}_{\tau_{ij}} \right\ket&= &\sum_{[k],[k']} \prod_{i<j} \tau_{ij}^{k_{ij}} \bar{\tau}_{ij}^{k_{ij}'} \left\bra C_{[k']} \right|\left. C_{[k]}\right\ket
 \eea
 we see that the generating functional contains information about the scalar products of the new intertwiners.
 The property of this scalar product is studied in \cite{Ljeff}.

For general $n$ we notice that
\be
  \det(1 + T\overline{T}) = \det \bpm T & 1 \\ -1 & \overline{T} \epm 
\ee
and since $T$ is $n\times n$ antisymmetric we can express the determinant as the square of a pfaffian as
\be
  \det(1 + T\overline{T}) = \left( 1 + \sum_{I} (-1)^{\frac{|I|}{2}} \mathrm{pf}(T_I)\mathrm{pf}(\overline{T_I}) \right)^2
\ee
where $I \subset \{1,...,n\}$, $|I| = 2,4,...$ up to $n$, and $T_I$ is the submatrix of $T$ consisting of the rows and columns indexed by $I$.   In particular we have  $\mathrm{pf}(T_{\{i,j\}}) = \tau_{ij}$ and  for $I = \{i,j,k,l\}$ 
\be
  R_{ijkl} \equiv \mathrm{pf}(T_{\{i,j,k,l\}}) = \tau_{ij}\tau_{kl} + \tau_{ik}\tau_{lj} + \tau_{il}\tau_{jk}.
\ee
By the pfaffian expansion formula for $|I| > 4$ $\mathrm{pf}(T_{I})$ consists of terms, all of which contain a factor $R_{ijkl}$ for some $1\leq i<j<k<l \leq n$.  For instance $\mathrm{pf}(T_{\{1,2,3,4,5,6\}}) = \tau_{12}R_{3456} - \tau_{13}R_{2456}+\cdots$.  Therefore if $\tau_{ij} =[z_{i}|z_{j}\ket$ then we have $\binom{n}{4}$ relations $R_{ijkl} = 0$ in which case the scalar product has the form
\be
 \left\bra {\cal C}_{[z_{i}|z_{j}\ket}| {\cal C}_{[z_{i}|z_{j}\ket} \right\ket 
 = \left(1-\sum_{i<j} [z_{i}|z_{j}\ket \bra z_{i}|z_{j} ] \right)^{-2} 
 = {\cal A}_{\Theta_{n}}(z_{i},\check{z}_{i})
\ee
where $ |\check{z}_{i}\ket \equiv |z_{i}]$.
This shows that  when $\tau_{ij}=[z_{i}|z_{j}\ket$, we recover the amplitude ${\cal A}$ we computed initially.
This is not a coincidence, this is always true for any graph as we now show.

\subsection{General evaluation}
\begin{definition}
Given an oriented graph $\Gamma$ we define  a generating functional  that depends holomorphically 
on  parameters $\tau_{ee'}^{v}=-\tau_{e'e}^{v}$  associated with a pair of edges $e,e'$ meeting at $v$.
\be\label{defG}
\G(\tau_{ee'}^{v}) \equiv  \int \prod_{e\in E_{\Gamma}} \rd\mu(w_{e}) \prod_{v\in V_{\Gamma}}{\cal C}^{(v)}_{\tau_{ee'}^{v}}(w_{e})
\ee
where the integral is over one spinor per edge of $\Gamma$ and we integrate a product of intertwiners for each vertex $v$.
If $v$ is a $n$-valent vertex with outgoing edges $e_{1},\cdots, e_{k}$ and  incoming edges $e_{k+1},\cdots, e_{n}$ we define
\be
{\cal C}^{(v)}_{\tau_{ee'}^{v}}(w_{e}) \equiv{\cal C}_{\tau_{ee'}^{v}}( w_{e_{1}},\cdots, w_{e_{k}},\check{w}_{e_{k+1}},\cdots, \check{w}_{e_{n}}).
\ee
\end{definition}

We then have the following lemma
\begin{lemma}
\be
\G(\tau_{ee'}^{v}) = {\cal A}_{\Gamma}(z_{e}), \quad \mathrm{if} \quad \tau_{ee'}^{v} = [z_{e}|z_{e'}\ket \quad \mathrm{when}\quad s(e)=s(e')=v
\ee
\end{lemma}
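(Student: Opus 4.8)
The plan is to show that the generating functional $\G(\tau^v_{ee'})$, when the parameters are specialized to $\tau^v_{ee'}=[z_e|z_{e'}\ket$, reduces exactly to the Gaussian integral representation of $\cal A_\Gamma$ established in the first lemma. The strategy is to compute the integral in \eqref{defG} explicitly, performing the Gaussian integration over the edge spinors $w_e$, and recognize the result as the vertex-Gaussian integral over the $\alpha_v$'s appearing in the proof of the first lemma.

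First I would write out ${\cal C}^{(v)}_{\tau^v_{ee'}}(w_e) = \exp\bigl(\sum_{e,e' \text{ at } v} \tau^v_{ee'} [w^{(v)}_e | w^{(v)}_{e'}\ket\bigr)$, where $w^{(v)}_e$ means $w_e$ if $e$ is outgoing at $v$ and $\check w_e = |w_e]$ if incoming; by antisymmetry of $\tau$ the sum over ordered pairs is well-defined. Substituting $\tau^v_{ee'} = [z_e|z_{e'}\ket$ and using the identity $[z_e|z_{e'}\ket [w_e|w_{e'}\ket = \bra w_e | z_e\ket \bra z_{e'} | w_{e'}] $-type manipulations (the same bracket rearrangement $[\alpha|w\ket[z|\beta] = -\bra\beta|z\ket[w|\alpha\ket$ used in the first lemma's proof), each vertex factor becomes a Gaussian in the $w_e$'s whose quadratic form couples, for each edge $e=(v\to v')$, the spinor $w_e$ at its two endpoints. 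The key point is that the exponent can be organized as $\sum_{e} (\text{linear in } w_e, \bar w_e, \text{ and } z_e, z_{e^{-1}}, \text{ and the "dual" spinors } \alpha_v)$ — more precisely, after inserting the $\SU(2)$-to-spinor correspondence from the first lemma backwards, or directly, one does the Gaussian $w_e$-integral edge by edge.

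The cleanest route is probably to integrate out the $w_e$ first: each ${\cal C}^{(v)}$ is linear in each $w_e$ in a suitable sense once we pair endpoints, so $\int \rd\mu(w_e)$ over a single spinor appearing linearly (or bilinearly) in the exponent produces, by the standard Gaussian formula $\int \rd\mu(w)\, e^{[a|w\ket + \bra w|b]} = e^{[a|b]}$ and its matrix generalizations, a new exponential whose argument is built from the $z_e$'s and the remaining vertex data. After all edge integrations one should land exactly on $\int \prod_v \rd\mu(\alpha_v)\, \exp(-\sum_{i,j}\bra\alpha_i|X_{ij}|\alpha_j\ket)$ with $X_{ij}$ as in \eqref{eqn_matrix_X} — or, bypassing the $\alpha$'s, directly on $1/\det(1+X(z_e))$. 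Then invoking \eqref{det1} from the first lemma identifies the result with ${\cal A}_\Gamma(z_e)$.

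The main obstacle I anticipate is bookkeeping the orientations and the $\check{\ }$ (contragredient) operations consistently: an edge $e$ contributes its spinor $w_e$ as an \emph{outgoing} argument at $s(e)$ and as a $\check w_e$ \emph{incoming} argument at $t(e)$, and the antisymmetric brackets $[\cdot|\cdot\ket$ together with $|\check z\ket = |z]$ must be tracked so that signs and the holomorphic-versus-conjugate structure match the matrix $X_{ij}$ exactly — in particular reproducing the relative minus sign between the two terms in \eqref{eqn_matrix_X} (the $s_e=i,t_e=j$ term versus the $t_e=i,s_e=j$ term). A secondary subtlety is that when $e,e'$ are two edges both at $v$ but one outgoing and one incoming, the parameter $\tau^v_{ee'}=[z_e|z_{e'}\ket$ still refers to the spinors $z_e,z_{e'}$ attached as seen from $v$, so one must be careful that $z_e$ here is the "half-edge" spinor at $v$, i.e. matches the convention $z_e$ vs. $z_{e^{-1}}$ of Section II. Once the dictionary between half-edge spinors and the $z_e, z_{e^{-1}}$ notation is pinned down, the computation is a direct, if slightly tedious, chain of Gaussian integrals, and the identification with \eqref{def1}–\eqref{def2} follows because both sides are the same Gaussian.
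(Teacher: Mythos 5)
Your overall strategy --- specialize $\tau^v_{ee'}=[z_e|z_{e'}\ket$ in (\ref{defG}) and integrate out the edge spinors $w_e$ --- points in the right direction, but the step where you claim the $w_e$ integrations ``land exactly on'' the vertex Gaussian $\int\prod_v\rd\mu(\alpha_v)\,e^{-\sum\bra\alpha_i|X_{ij}|\alpha_j\ket}$, or directly on $1/\det(1+X)$, is a genuine gap. The exponent of $\prod_v{\cal C}^{(v)}$ is bilinear in the $w_e$'s, so integrating them out is a Gaussian computation whose natural output is the reciprocal of a determinant indexed by \emph{half-edges}, namely $1/\det(E-\T)$ --- this is exactly Lemma \ref{thm_gen_gauss}, which holds for arbitrary $\tau$ and makes no use of the specialization. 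Nothing in that integration introduces the vertex spinors $\alpha_v$ (those arise from the group integrals via the homogeneity identity (\ref{eqn_homo_integral}), which your route never touches), and the identity $\det(E-\T)=\det(1+X)$ between a $2|E_\Gamma|\times 2|E_\Gamma|$ and a $2|V_\Gamma|\times 2|V_\Gamma|$ determinant is \emph{not} automatic: it holds only on the locus $\tau^v_{ee'}=[z_e|z_{e'}\ket$, and the mechanism is the Pl\"ucker relation forcing the cancellation of intersecting simple loops against pairs of cycles sharing a vertex. That cancellation is the entire content of Corollary \ref{cor_gen} and requires the loop expansions of Theorems \ref{thm_amp} and \ref{thm_gen}; it is not ``a direct, if slightly tedious, chain of Gaussian integrals,'' and your proposal never invokes the Pl\"ucker relation where it is actually needed.

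The paper's proof takes a different and shorter route that avoids determinants altogether: it uses the identity (\ref{CC}) to rewrite each vertex factor as ${\cal C}^{(v)}_{[z_e|z_{e'}\ket}(w_e)=\sum_{j_e}\frac{(J_v+1)!}{\prod_e(2j_e)!}\int\rd g\prod_e[z_e|g|w_e\ket^{2j_e}$, i.e.\ it recognizes the specialized Schwinger generating functional as the generating series of the group-averaged coherent intertwiners. The $w_e$ integrations are then not a full Gaussian but the elementary orthogonality relation $\int\rd\mu(w)\,[z|g_s|w\ket^{2j}[z'|g_t|\check w\ket^{2j'}=(2j)!\,\delta_{j,j'}[z|g_sg_t^{-1}|z'\ket^{2j}$, which glues the two half-edges into $[z_e|g_{s_e}g_{t_e}^{-1}|z_{e^{-1}}\ket^{2j_e}$ and reproduces the definition (\ref{def2}) of ${\cal A}_\Gamma$ term by term in the spins. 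If you want to salvage your determinant route you must supply the missing link $\det(E-\T)\big|_{\tau=[z|z'\ket}=\det(1+X)$, which in this paper is only obtained by composing Lemma \ref{thm_gen_gauss}, Theorem \ref{thm_gen}, Corollary \ref{cor_gen} and Theorem \ref{thm_amp}; otherwise the intertwiner-expansion argument via (\ref{CC}) is the efficient path.
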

\begin{proof}
The proof is straightforward;
we start from the definition (\ref{defC}) of ${\cal C}_{\tau}$ and notice that when $ \tau_{ee'}^{v} = [z_{e}|z_{e'}\ket$ this expression reads
\bea
{\cal C}_{[z_{e}|z_{e'}\ket} (w_{e}) =\sum_{[k]} (J+1)! \widehat{C}_{[k]}(z_{e}) \widehat{C}_{[k]}(w_{e}) 
= \sum_{j_{e}} \frac{(J+1)!}{(2j_{e})!} \int \rd g [z_{e}|g|w_{e}\ket^{2j_{e}}
\eea
where we have used (\ref{CC}) in the second equality.
Integrating out $w_{e}$ and using that $$\int \rd\mu(w)[z |g_{s}|{w}\ket^{2j} [z'|g_{t}|\check{w} \ket^{2j'} =
\int \rd\mu(w)[z |g_{s}|{w}\ket^{2j} \bra w |g_{t}^{-1}|z' \ket^{2j'}=  (2j)! \delta_{j,j'}[z|g_{s}g_{t}^{-1}|z'\ket^{2j},$$ we easily obtain that
\be
\G([z_{e}|z_{e'}\ket ) = \sum_{j_{e}} \frac{\prod_{v} (J_{v}+1)!}{\prod_{e}(2j_{e})!} 
\int \prod_{v\in V_{\Gamma}} \rd g_{v} [z_{e}|g_{s_{e}}g^{-1}_{t_{e}}|z_{e^{-1}}\ket^{2j_{e}} = {\cal A}_{\Gamma}(z_{e}).
\ee
\end{proof}

We now formulate our last main result
\begin{lemma} \label{thm_gen_gauss}
The generating functional $\G$ can be evaluated as an inverse determinant 
\be
\G(\tau_{ee'}^{v}) =\frac1{\det(E-T^{\Gamma})}
\ee
where 
 \be\label{defE}
E \equiv \bpm 0 & 1 \\ -1 & 0  \epm
\ee  and $T^{\Gamma}$ is a matrix whose entries  are  labeled by half edges (or oriented edges) of $\Gamma$.  The matrix elements of $T^{\Gamma}$ are given by:
\be \label{eqn_T_tau}
 T^{\Gamma}_{e_{1} e_{2}} = \tau_{e_{1}e_{2}}^{v}\quad \mathrm{if} \quad s(e_{1})=s(e_{2})=v,
\ee
while all the other matrix elements vanish.
This matrix is skew-symmetric
\be \label{eqn_T_antisym}
T^{\Gamma}_{e_{1}e_{2}} =-T^{\Gamma}_{e_{2}e_{1}}
\ee
\end{lemma}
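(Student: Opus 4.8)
The idea is to substitute the definition \eqref{defC} of the vertex functionals ${\cal C}^{(v)}$ into \eqref{defG} and recognize the result as a single Gaussian integral over $\C^{2|E_{\Gamma}|}$, exactly as in the evaluation of $\langle {\cal C}_{\tau}|{\cal C}_{\tau}\rangle$ carried out above. Writing out each ${\cal C}^{(v)}_{\tau^{v}}$, the product $\prod_{v}{\cal C}^{(v)}$ becomes the exponential of $\sum_{v}\sum_{e,e'\text{ at }v}\tau^{v}_{ee'}\,[u^{v}_{e}|u^{v}_{e'}\ket$, where the half-edge spinor $u^{v}_{e}$ equals $w_{e}$ if $e$ is outgoing at $v$ (i.e. $s(e)=v$) and equals $\check w_{e}=|w_{e}]$ if $e$ is incoming at $v$ (i.e. $t(e)=v$). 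Using the antisymmetry of $\tau^{v}$ and of the holomorphic bracket, this sum equals $\sum_{e_{1},e_{2}}T^{\Gamma}_{e_{1}e_{2}}\,[u_{e_{1}}|u_{e_{2}}\ket$ over all ordered pairs of oriented edges, with $u_{e}$ the spinor carried by the oriented edge $e$ ($w_{e}$ for the chosen orientation, $\check w_{e}$ for the reverse). This rewriting already makes \eqref{eqn_T_antisym} manifest, since $\tau^{v}_{ee'}=-\tau^{v}_{e'e}$ while $T^{\Gamma}$ vanishes off the same-source pairs.

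Next I would pass to spinor components. Writing the holomorphic bracket as $[u|u'\ket=u^{T}Eu'$ with $E$ the $2\times 2$ matrix of \eqref{defE}, and collecting the first components of all oriented-edge spinors into a vector $x$ and their second components into a vector $y$ (each of length $2|E_{\Gamma}|$), the $\tau$-part of the exponent becomes $x^{T}T^{\Gamma}y$ after one further use of antisymmetry. The Gaussian weight of the measure supplies the rest: for each edge $e$ one has $\bra w_{e}|w_{e}\ket = [w_{e}|\check w_{e}\ket = u_{e}^{T}Eu_{e^{-1}}$, so $\prod_{e}e^{-\bra w_{e}|w_{e}\ket}=e^{-x^{T}Ey}$ once $E$ is read as the block-diagonal matrix with one block \eqref{defE} per edge, pairing the oriented edge $e$ with $e^{-1}$. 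Hence
\be
\G(\tau^{v}_{ee'}) = \int \prod_{e\in E_{\Gamma}} \frac{\rd^{4}w_{e}}{\pi^{2}}\;\exp\!\big(-x^{T}(E-T^{\Gamma})y\big),
\ee
which is the graph-level analogue of the determinant formula obtained above for $\langle {\cal C}_{\tau}|{\cal C}_{\tau}\rangle$.

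It then remains to evaluate this Gaussian integral and identify it with $1/\det(E-T^{\Gamma})$. The vectors $x$ and $y$ are built from the $2|E_{\Gamma}|$ genuinely independent complex integration variables $\{\alpha_{e},\beta_{e}\}_{e\in E_{\Gamma}}$, the entries indexed by the reverse edges being their complex conjugates up to a sign; moreover, exactly as in the $\langle {\cal C}_{\tau}|{\cal C}_{\tau}\rangle$ computation, the second-component variables $\beta_{e}$ enter the exponent only linearly and through $\bar\beta\beta$-pairings, never through $\beta\beta$ terms. I would therefore integrate out the $\beta_{e}$ first, producing $1/\det$ of a block built from $E-T^{\Gamma}$, and then integrate out the $\alpha_{e}$, producing $1/\det$ of the corresponding Schur complement; by the standard block-determinant identity the product of the two factors is precisely $\det(E-T^{\Gamma})$. (Equivalently one may invoke the general complex Gaussian formula and simplify directly.) I expect the only real difficulty to be the bookkeeping: keeping the component, orientation and sign labels straight when translating the spinor brackets into the single matrix $E-T^{\Gamma}$, and verifying that the block-determinant factorization lands on exactly that matrix rather than a conjugate or a rearrangement of it.
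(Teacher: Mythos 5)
Your proposal is correct and follows essentially the same route as the paper: substitute the Schwinger generating functions at each vertex, reduce to a single Gaussian over the edge spinors' components, integrate out one component family and then the other, and use the block-determinant (Schur complement) identity to land on $\det(E-T^{\Gamma})$. The only cosmetic difference is that you package the exponent directly as $x^{T}(E-T^{\Gamma})y$ from the outset, whereas the paper first extracts the four orientation-dependent blocks $A=\T_{e^{-1}e'}$, $B=-\T_{ee'}$, $C=\T_{e^{-1}e'^{-1}}$, $D=A^{t}$ and only reassembles them into $E-\T$ at the end by multiplying by the unit-determinant matrix $E$; your observation that the exponent contains no $\beta\beta$ or $\alpha\alpha$ terms is exactly the structural fact the paper's two-stage integration relies on.
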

\begin{proof}
Two edges $e$ and $e'$ of $\Gamma$ can either share zero one or two vertices.
when two edges share a vertex there are four possible orientation of the edges at this vertex,
since each edge can be either incoming or outgoing.
 Taking all of these possibilities  
 we introduce the coefficients $\T_{ee'}$ which vanish if $s(e)$ is different from $s(e')$  is given by
 $\T_{ee'} \equiv \tau_{e e'}^{s_{e}}$ otherwise.
 If two edges meet at one vertex, one of the four coefficients 
 $\T_{ee'},\T_{e^{-1}e'^{-1}},\T_{ee'^{-1}},\T_{e^{-1}e'}$ do not vanish.
 If two edges meet at two vertices two such coefficients do not vanish.

 This matrix can be used to express explicitly the amplitude $\G$ taking into account the orientation of the edges,
 and since $[\check{w}|w'\ket= -\bra w|w'\ket$ and $[w|\check{w}'\ket= [w|w']$ and 
 $[\check{w}|\check{w}'\ket= -\bra w|w']$  the definition 
 (\ref{defG}) translates into
  \begin{align}
\G(\tau_{ee'}^{v}) 
  &= \int \prod_{e\in E_{\Gamma}} \rd\mu(w_{e}) \exp \Big\{ -\frac12 \sum_{e,e'}\left(
  \T_{e^{-1}e'} \bra w_{e}|w_{e'}\ket + \T_{e^{-1}e'^{-1}} \bra w_{e}|w_{e'}]  
  -  \T_{ee'}[ w_{e} | w_{e'} \ket  - \T_{e e'^{-1}} [ w_{e} | w_{e'} ] 
 \right)\Big\}\nonumber 
\end{align}
Note that the anti-symmetry properties  $\T_{e e'}$  is compatible with the symmetry properties of the spinor products.  
Expressing this in terms of the two components $\alpha_e,\beta_e \in \C$ of the spinor $w_e$, we get 
\begin{align}
 \G (\tau_{ee'}^{v}) 
  &= \int \prod_{e\in E_{\Gamma}} \rd\mu(\alpha_{e}) \rd\mu(\beta_{e}) 
  \exp\Big\{ -\frac12 \sum_{e,e'} 
  \Big( \T_{e^{-1}e' }(\overline{\alpha}_{e} \alpha_{e'} + \overline{\beta}_{e} \beta_{e'})  + \T_{e^{-1}e'^{-1}} ( \overline{\beta}_{e} \overline{\alpha}_{e'} -\overline{\alpha}_{e} \overline{\beta}_{e'})
   \nonumber \\
   &   \qquad \qquad \qquad \qquad\qquad \qquad \qquad \quad-  \T_{ee'}(\alpha_{e} \beta_{e'} - \beta_{e} \alpha_{e'}) - \T_{e e'^{-1}}(\alpha_{e} \overline{\alpha}_{e'} + \beta_{e} \overline{\beta}_{e'}) \Big) \Big\} 
  \nonumber \\ \nonumber
  &= \int \prod_{e\in E_{\Gamma}} \rd\mu(\alpha_{e}) \rd\mu(\beta_{e}) \exp \Big\{ -\sum_{e,e'} \Big( \overline{\alpha}_{e} A_{e e'} \alpha_{e'} + \beta_{e} B_{e e'} \alpha_{e'} + \overline{\alpha}_{e} C_{e e'} \overline{\beta}_{e'} +    {\beta}_{e} D_{e e'} \overline{\beta}_{e'} \Big) \Big\} 
\end{align}
where $\rd \mu(\alpha) = e^{-|\alpha|^2} \rd \alpha/\pi$ and
\begin{align}\nonumber
  A_{e e'} &=\frac12( \T_{e^{-1} e'}-\T_{e' e^{-1}})=  \T_{e^{-1} e'}, \qquad
  D_{e e'} = \frac12(\T_{e'^{-1} e}-\T_{e e'^{-1}})=\T_{e'^{-1} e} = A_{ee'}^{t} \\
  B_{e e'} &=  \frac12(\T_{e' e}-\T_{e e'} )= - \T_{e e'} , \qquad \nonumber\qquad \,\,\,\,\,
  C_{e e'} = \frac12(\T_{e^{-1} e'^{-1}} - \T_{e'^{-1} e^{-1}})=\T_{e^{-1} e'^{-1}}    
  \end{align}
  where $A^{t}$ denotes the transpose of $A$.
Performing the Gaussian integrations first of $\alpha$ and then of $\beta$ we get
\begin{align}
 \G(\tau_{ee'}^{v})
  &= \frac{1}{\mathrm{det}(1+ A)} \int \prod_{e\in E_{\Gamma}} \rd\mu(\beta_{e}) \exp\Big\{ - \sum_{e,e'} \Big({\beta}_{e} A^{t}_{e e'}  \overline{\beta}_{e'} - \beta_{e} (B(1+A)^{-1}C)_{ee'}
  \overline{\beta}_{e'} \Big) \Big\} \\
  &= \mathrm{det}(1+A)^{-1} \mathrm{det}\left( 1 + A^{t} - B(1+A)^{-1} C \right)^{-1} \\
  &= \mathrm{det}\bpm 1+A & 0 \\ B & 1  \epm^{-1} \mathrm{det}\bpm 1 & (1+A)^{-1}C \\ 0 & 1 + A^{t} - B(1+A)^{-1}C \epm^{-1}\\
  &= \mathrm{det}\bpm 1+A & C \\ B & 1 + A^{t} \epm^{-1} 
\end{align}
The  matrix $E$ introduced in (\ref{defE}) 
has a unit determinant; thus the previous determinant is also equal to the determinant of the antisymmetric matrix
\be
\mathrm{det}\left[E \bpm 1+A & B \\ C & 1 + A^{t} \epm  \right]^{-1} =
\mathrm{det}\bpm B  & (1+A^{t}) \\ -(1+A) & -C \epm^{-1} 
=\det(E - \T)
\ee
which is what we desire to establish.
\end{proof}
We now are going to evaluate explicitly this determinant in much the same way as Theorem \ref{thm_amp}.  In order to do so we must define the following quantities.
\begin{definition}
A simple loop of $\Gamma$ is a loop of $\Gamma$ in which each edge enters at most once.  We say two
simple loops are disjoint if they have no edges in common.
Given a simple loop $\ell= \{e_{1}, \cdots ,e_{n}\}$ we define the quantity
\be \label{eqn_A_loop}
A_{\ell}(\tau) = -(-1)^{|e|} \tau_{e_{1}^{-1} e_2}^{s(e_2)} \tau_{e_{2}^{-1} e_3}^{s(e_3)} \cdots \tau_{e_{n}^{-1} e_1}^{s(e_1)}
\ee
where $|e|$ is the number of edges of $l$ whose orientation agrees with the chosen orientation of  $\Gamma$.  
Finally, given a collection of disjoint simple loops $L= l_{1},..., l_{k}$ we define
\be 
A_{L}(\tau) = A_{\ell_{1}}(\tau)\cdots A_{\ell_{k}}(\tau).
\ee
\end{definition}
With these definitions the generating functional is given by
\begin{theorem} \label{thm_gen}
\be
\G(\tau) = \frac1{\left(1 + \sum_{L} A_{L}(\tau)\right)^{2}}
\ee
where the sum is over all collections of disjoint simple loops of $\Gamma$.
\end{theorem}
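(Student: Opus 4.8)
The plan is to combine the inverse‑determinant formula of Lemma \ref{thm_gen_gauss} with a combinatorial expansion of $\det(E-T^\Gamma)$ that mirrors what was done for $\det(1+X)$ in Theorem \ref{thm_amp}. First I would observe that $E-T^\Gamma$ is a skew‑symmetric matrix indexed by the half‑edges (oriented edges) of $\Gamma$, built from $2\times 2$ blocks $E$ on the ``diagonal'' pairing $e$ with $e^{-1}$ and from the $\tau$‑entries $T^\Gamma_{e_1e_2}=\tau^v_{e_1e_2}$ whenever $s(e_1)=s(e_2)=v$. Since it is skew‑symmetric, its determinant is the square of a Pfaffian, so it suffices to show
\be
\mathrm{pf}(E-T^\Gamma) = \pm\Bigl(1+\sum_{L}A_L(\tau)\Bigr),
\ee
and then fix the overall sign. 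This reduces a determinant identity to a Pfaffian identity, which is the natural object here because the terms $A_\ell(\tau)$ are themselves ``loop'' quantities.

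Next I would expand the Pfaffian as a sum over perfect matchings of the half‑edge index set. The key structural point is that each half edge $e$ has exactly two available partners in the matrix $E-T^\Gamma$: its reverse $e^{-1}$ (through the $E$ block) and any half edge $e'$ with $s(e')=s(e)$ (through a $\tau$‑entry). So a perfect matching that contributes is obtained by choosing, at each vertex, a pairing of the half edges incident there into ``$\tau$‑pairs'', and then connecting the leftover half edges to their reverses via $E$. Tracing how the $\tau$‑pairs at successive vertices are linked through the $E$‑edges, one sees that the $\tau$‑pairs organize into closed cycles of half edges — i.e. into simple loops of $\Gamma$ — while every half edge not used by a loop is matched to its own reverse, contributing the trivial $1$ from the corresponding $E$ block. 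This is exactly the dictionary: a contributing matching $\leftrightarrow$ a collection of pairwise edge‑disjoint simple loops $L=\{\ell_1,\dots,\ell_k\}$, and its weight is $\prod_i$ (product of $\tau$'s around $\ell_i$), i.e. $A_L(\tau)$ up to sign. The empty matching (all half edges paired with their reverses) gives the leading $1$.

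The main obstacle, as usual with Pfaffian expansions, will be the bookkeeping of signs. I would handle it in two layers. The sign $(-1)^{|e|}$ inside each $A_\ell(\tau)$ in \eqref{eqn_A_loop} must be shown to come from the orientation mismatch between the loop's traversal and $\Gamma$'s edge orientations — this is where the identities $[\check w|w'\ket=-\bra w|w'\ket$, $[w|\check w'\ket=[w|w']$ used in the proof of Lemma \ref{thm_gen_gauss} get repackaged, since going ``with'' versus ``against'' an edge toggles which of the four $\T$‑coefficients appears and hence a sign, together with the $-\frac12$ and the skew‑symmetry of $\tau$. The second layer is the global Pfaffian sign of each matching relative to the reference matching $\{(e,e^{-1})\}$: reordering the index set so that a $k$‑loop's half edges become consecutive is a permutation whose sign I would compute to be $(-1)^{\text{something}}$ that, combined with the $-1$ from the $E$‑block that the loop ``displaces'', yields precisely the extra global $-1$ per loop recorded in Theorem \ref{thm_gen} (the ``$1+\sum_L A_L$'' rather than ``$1-\sum_L A_L$''). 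I expect this sign computation — showing that each of the per‑loop sign contributions (the $(-1)^{|e|}$, the permutation sign, the displaced‑$E$ sign) assembles into exactly the $A_L$ of the Definition — to be the only genuinely delicate part; everything else is the combinatorial matching‑to‑loops correspondence plus squaring the Pfaffian. Finally I would remark that when $\Gamma$ has all vertices of valence making the theta‑type collapse, or when $\tau^v_{ee'}=[z_e|z_{e'}\ket$, this reduces to Theorem \ref{thm_amp} via Lemma preceding it, providing a consistency check; indeed in that specialization simple loops that repeat a vertex give vanishing contributions by the Plücker‑type identity, recovering the restriction to non‑trivial cycles there.
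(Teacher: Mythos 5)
Your proposal is correct and follows essentially the same route as the paper: reduce $\det(E-T^\Gamma)$ to a Pfaffian, expand it combinatorially, identify the contributing terms with collections of edge-disjoint simple loops of $\Gamma$ (with the reference pairing $e\leftrightarrow e^{-1}$ supplying the leading $1$), and track the per-loop signs coming from orientation mismatches and the permutation sign. The only cosmetic difference is that the paper organizes the matching expansion via Rote's cycle-cover formula for $\mathrm{pf}(E-T^\Gamma)\cdot\mathrm{pf}(E)$, which is exactly your superposition of each matching with the reference matching.
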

Note that this result for the generating functional $\G(\tau)$ is very similar to the first theorem \ref{thm_amp} we established in the first section
for the coherent amplitude ${\cal A}_{\Gamma}(z_{e})$  .
The key difference is that the general amplitude involve a sum over simple loops which contains cycles or non intersecting simple loops, but also 
simple loops that intersect at a vertex. The relation between the two theorems comes from the fact that if the Pl\"ucker  relation is satisfied then the sum of 
loops that meet at this vertex vanish. This can be easily seen graphically in Fig. \ref{fig_STU}
and it is established algebraically in the appendix.  This allows us to  offer an alternative proof of Theorem \ref{thm_amp}
as a corollary to Theorem \ref{thm_gen}.
\begin{corollary} \label{cor_gen}
If $\tau_{ee'}^{v} = [z_{e}|z_{e'}\ket$ where $s(e)=s(e')=v$ then
\be
{\cal G}_{\Gamma}(\tau) = \frac1{\left(1 + \sum_{C} A_{C}(\tau)\right)^{2}}, 
\ee
where the sum is over all disjoint cycle unions of $\Gamma$.
\end{corollary}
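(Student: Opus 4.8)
The plan is to deduce Corollary~\ref{cor_gen} directly from Theorem~\ref{thm_gen}: once we substitute $\tau_{ee'}^{v}=[z_{e}|z_{e'}\ket$, the sum over \emph{all} collections of disjoint simple loops collapses onto the sum over disjoint cycle unions, because the contributions of the collections that traverse some vertex more than once cancel in groups. First I would check that the substitution is compatible with the earlier notation: for a loop $\ell=\{e_{1},\dots,e_{n}\}$ the factors in \Ref{eqn_A_loop} become $\tau_{e_{i}^{-1}e_{i+1}}^{s(e_{i+1})}=[z_{e_{i}^{-1}}|z_{e_{i+1}}\ket=[\tilde{z}_{e_{i}}|z_{e_{i+1}}\ket$, so a \emph{cycle} $c$ contributes exactly $A_{\ell}(\tau)=A_{c}(z_{e})$ in the sense of the definition preceding \Ref{eqn_cycle_union_amp}. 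Hence Corollary~\ref{cor_gen} amounts to the identity $\sum_{L}A_{L}(\tau)=\sum_{C}A_{C}(z_{e})$, i.e. to the vanishing of the net contribution of the simple-loop collections that are not disjoint cycle unions.

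Next I would reorganize $\sum_{L}A_{L}(\tau)$ according to the set of edges $S=\bigcup_{i}\ell_{i}\subseteq E_{\Gamma}$ occupied by a collection. For fixed $S$, a collection of pairwise edge-disjoint simple loops with support $S$ is the same datum as a choice, at each vertex $v$, of a perfect matching of the half-edges of the $S$-edges incident to $v$ (following the matchings along the edges of $S$ closes up into loops, and edge-disjointness makes each loop simple). Summing the $\tau$-monomials in $A_{L}(\tau)$ over all matchings at a fixed vertex $v$ yields, up to sign, the Pfaffian of the skew matrix $\T$ of Theorem~\ref{thm_gen_gauss} restricted to the half-edges of $S$ at $v$ (this uses the skew-symmetry \Ref{eqn_T_antisym}), so that
\be
\sum_{L:\ \bigcup_{i}\ell_{i}=S}A_{L}(\tau)=\varepsilon(S)\prod_{v\in V_{\Gamma}}\mathrm{pf}\!\left(\T\big|_{H_{v}(S)}\right),
\ee
where $H_{v}(S)$ is the set of half-edges of $S$ at $v$ and $\varepsilon(S)=\pm1$ gathers the loop prefactors $-(-1)^{|e|}$; after the matchings are summed over, the residual sign $\varepsilon(S)$ should depend on $S$ alone.

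Finally I would feed in the Pl\"ucker identity. With $\tau_{ee'}^{v}=[z_{e}|z_{e'}\ket$ one has $R_{ijkl}=\mathrm{pf}(\T|_{\{i,j,k,l\}})=0$, exactly as in the discussion around \Ref{det4}, and then by the Pfaffian expansion every $\mathrm{pf}(\T|_{I})$ with $|I|\ge 4$ vanishes, while $\mathrm{pf}(\T|_{\{i,j\}})=\tau_{ij}$ and $\mathrm{pf}(\T|_{\emptyset})=1$. Consequently only those $S$ survive for which $|H_{v}(S)|\in\{0,2\}$ at every vertex, i.e. $S$ is a $2$-regular subgraph of $\Gamma$ — precisely a disjoint cycle union $C$. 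For such $S=C$ each vertex Pfaffian reduces to a single entry $\tau_{e_{i}^{-1}e_{i+1}}^{s(e_{i+1})}$, and one checks that $\varepsilon(C)$ together with the residual matching signs reassembles $\prod_{v}\mathrm{pf}(\cdots)$ into $A_{C}(z_{e})=\prod_{c}\big(-(-1)^{|e|}[\tilde{z}_{e_{1}}|z_{e_{2}}\ket\cdots[\tilde{z}_{e_{n}}|z_{e_{1}}\ket\big)$. This gives $\sum_{L}A_{L}(\tau)=\sum_{C}A_{C}(z_{e})$, and Theorem~\ref{thm_gen} then yields the claimed formula for ${\cal G}_{\Gamma}(\tau)$.

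The step I expect to be the real obstacle is the sign bookkeeping: showing that the per-loop signs $-(-1)^{|e|}$ in $A_{\ell}$, the orientation data absorbed in passing from $\T$ to $E-\T$, and the Pfaffian-expansion sign conventions conspire so that (i) the sum over matchings at each vertex is genuinely a Pfaffian, (ii) the leftover sign $\varepsilon(S)$ is a function of $S$ only, and (iii) for $2$-regular $S$ the result is exactly $A_{C}(z_{e})$. A more pedestrian alternative, matching the picture in Fig.~\ref{fig_STU}, is to argue locally: any simple-loop collection that uses four half-edges at a vertex $v$ sits in a triple given by the three matchings of those half-edges, with amplitudes proportional to $\tau_{12}\tau_{34}+\tau_{13}\tau_{42}+\tau_{14}\tau_{23}=R_{1234}=0$; iterating this reduction vertex by vertex eliminates every collection that is not a disjoint cycle union. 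The same sign analysis reappears there, in verifying that the three re-routings produce precisely the combination $R_{1234}$.
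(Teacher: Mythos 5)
Your fallback argument \emph{is} the paper's proof: the appendix proves Corollary \ref{cor_gen} exactly by the local triple cancellation you describe as the ``pedestrian alternative''. Given a simple loop that revisits a vertex $v$, the paper exhibits the three re-routings $S,T,U$ of the four half-edges at $v$ (two intersecting loops and one split pair, cf.\ Fig.~\ref{fig_STU}), and carries out precisely the sign bookkeeping you flag as the obstacle: writing each amplitude via the prescription \Ref{eqn_A_loop} and using the antisymmetry of $\tau$, it finds $A_S+A_T+A_U\propto \tau_{12}^{v}\tau_{34}^{v}+\tau_{13}^{v}\tau_{42}^{v}-\tau_{14}^{v}\tau_{32}^{v}=R_{1234}$, which vanishes under $\tau^{v}_{ee'}=[z_e|z_{e'}\ket$ by Pl\"ucker; hence only disjoint cycle unions survive in Theorem \ref{thm_gen}. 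Your primary route --- grouping loop collections by their edge support $S$, recognizing a collection as a choice of perfect matching of $H_v(S)$ at each vertex, and resumming each vertex into $\mathrm{pf}(\T|_{H_v(S)})$ so that all vertices with $|H_v(S)|\geq 4$ are killed at once --- is a genuinely different and in one respect stronger organization: the paper's argument explicitly treats only a single double point and waves at the iteration, whereas your version handles vertices traversed three or more times uniformly (every Pfaffian of size $\geq 4$ vanishes term by term because each monomial contains some $R_{ijkl}$). What it costs is the unproven claim that the residual sign $\varepsilon(S)$ depends on $S$ alone; this is true (it is essentially the same cycle-cover-to-Pfaffian sign identity already used in the proof of Theorem \ref{thm_gen} via Rote's formula), but as written your main route is a plan rather than a proof, and the complete argument you actually supply is the local one, which coincides with the paper's.
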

Again the proof of this corollary can be found in the appendix.  It is interesting to note the similarity between the proof of Lemma \ref{thm_sc_det_row_operation} and the proof of Corollary \ref{cor_gen}.

\begin{figure} 
  \centering
    \includegraphics[width=1\textwidth]{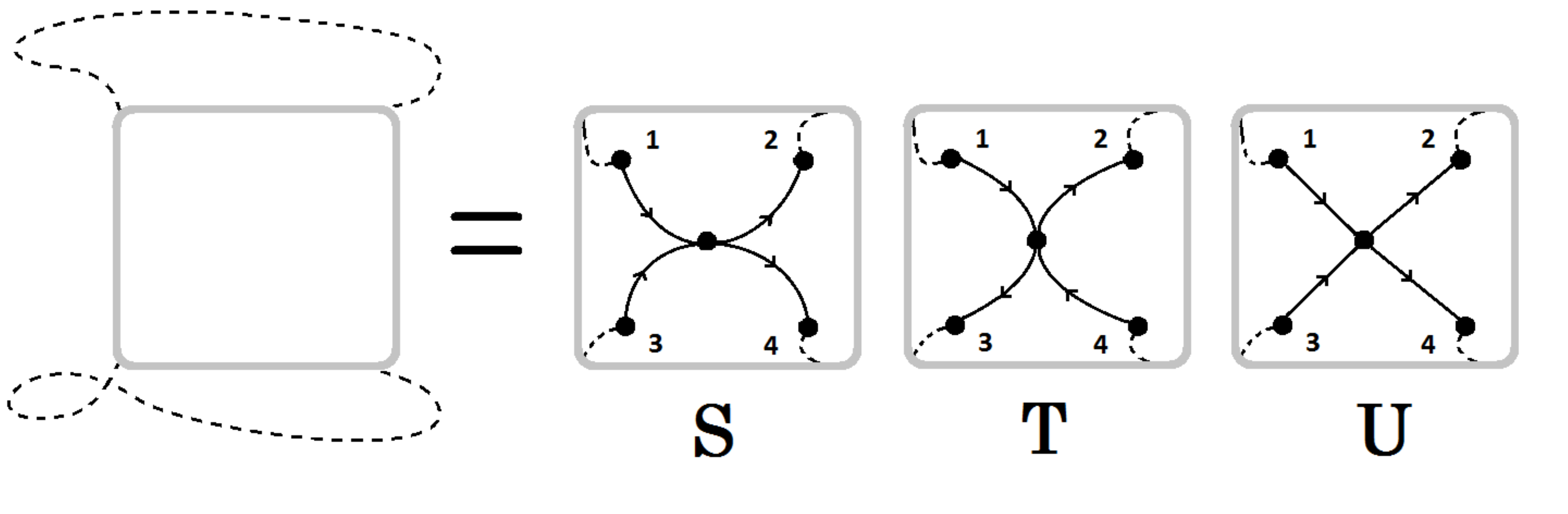}
    \caption{A simple loop depicted by the dashed line intersects itself at a vertex within the box.  In fact there three possible collections of inequivalent simple loops which intersect at this vertex and have the same unoriented edges in common.  These three collections correspond to the three orientations S,T, and U of the four edges meeting at this vertex.  Note the following identification of vertices: S=(12)(34), T=(13)(42), U=(14)(32) which is an allusion to the Pl\"ucker relation.  An algebraic proof of how the amplitudes of intersecting simple loops arrange into the Pl\"ucker form is given in the proof of Corollary \ref{cor_gen}. }  \label{fig_STU}
\end{figure}

\acknowledgments

We would like to thank Etera Livine and Valentin Bonzom for helpful discussions.  Research at Perimeter Institute is supported by the Government of Canada through Industry Canada
and by the Province of Ontario through the Ministry of Research and Innovation.  JH would like to thank the Natural Sciences and Engineering Research Council of Canada
(NSERC) for his post graduate scholarship.

\appendix

\section{Invariant integration of a homogeneous function}

Given a spinor $|\alpha \ket$ we define the U$(2)$ group element $g(\alpha) = |0\ket \bra \alpha| + |0][\alpha|$ where $g(\alpha)g(\alpha)^\dagger = \bra \alpha | \alpha \ket$.  Suppose that  $F(g(\alpha))$ is a homogeneous function of $|\alpha \ket$ of degree $2J$, that is 
$F(g(\lambda \alpha)) = \lambda^{2J}F( g(\alpha))$. Then in the pseudo-spherical coordinates
\be 
  | \alpha \ket = \bpm  r \cos(\theta) e^{i \phi} \\  r \sin(\theta) e^{i \psi} \epm
\ee
where $r \in (0, \infty)$, $\theta \in [0,\pi/2)$, $\phi \in [0,2\pi)$, $\psi \in [0,2\pi)$ we have
\be
  F(g(\alpha)) = r^{2J} F(g(\widetilde{\alpha}))
\ee
where
\be 
  |\widetilde{\alpha} \ket = \bpm \cos(\theta) e^{i \phi} \\  \sin(\theta) e^{i \psi} \epm
\ee
and $g(\widetilde{\alpha}) \in \SU(2)$.  The Lebesgue measure in these coordinates is $\rd^{4}|\alpha\ket = r^3 \sin(\theta) \cos(\theta) dr \wedge d\phi \wedge d\theta \wedge d\psi$ and so
\be
 \int_{\C^{2}} \rd^{4}|\alpha\ket   e^{-\bra \alpha | \alpha \ket} F(g(\alpha))
= \int_{0}^{\infty} \rd r r^{3+2J} e^{-r^2} \int_{0}^{\pi/2} \rd \theta \sin(\theta) \cos(\theta) \int_{0}^{2\pi} \rd \phi \int_{0}^{2\pi} \rd \psi F(g(\widetilde{\alpha})).
\ee
Performing the integration over $r$
\be
  \int \rd r r^{3+2J} e^{-r^2} = \frac{1}{2} \Gamma(J+2)
\ee
gives
\be
 \int_{\C^{2}} \rd^{4}|\alpha\ket   e^{-\bra \alpha | \alpha \ket} F(g(\alpha))
= \Gamma(J+2) \int_{SU(2)}\rd g  F(g)
\ee
where $\rd g$ is the normalized Haar measure on SU$(2)$.

\section{Proof of Theorem \ref{thm_amp}}

Recall the Laplace expansion of the determinant for a $n \times n$ matrix (of complex numbers)
\be
  \mathrm{det}(A) = \sum_{\pi} \text{sgn}(\pi) a_{1 \pi(1)} a_{2 \pi(2)} \cdots a_{n \pi(n)}.
  \label{eqn_Laplace_det}
\ee
An equivalent definition of the determinant can be given in terms of cycle covers of a complete directed graph on $n$ vertices \cite{Mahajan}.  On a complete graph we can label a loop by a sequence of vertices since there is only one edge between any two vertices.  A cycle is defined to be a simple loop for which all the vertices are distinct and a cycle cover is a collection of cycles which covers all the vertices in the graph, i.e. all of $\{1,...,n\}$.  Notice that every permutation of $(1,...,n)$ corresponds to a unique partition of the set $\{1,...,n\}$ into disjoint cycles.  For example the permutation 
\be
  \pi =
  \bpm 
    1 & 2 & 3 & 4 & 5 & 6 \\
    2 & 4 & 6 & 1 & 5 & 3
  \epm
\ee
corresponds to the cycle cover $\cC = (124)(36)(5)$.  The weight of a cycle $C=(c_{1}\cdots c_{i})$ is defined to be $W(C) = a_{c_1 c_2} a_{c_2 c_3} ... a_{c_i c_1}$ and the weight of a cycle cover is the product of the weights of its cycles
\footnote{The weight of a loop is defined in the same way.}.
Furthermore, it can be shown that the sign of a permutation is equal to $(-1)^{n+k}$ where $k$ is the number of cycles in its corresponding cover.   Therefore, Eq. (\ref{eqn_Laplace_det}) can be written as
\be
  \mathrm{det}(A) = \sum_{\cC} \mathrm{sgn}(\cC) W(\cC).
  \label{eqn_det_cycles}
\ee
Now suppose that the matrix $A$ is composed of elements which are noncommutative such as $2 \times 2$ matrices in the case of Eq. (\ref{eqn_matrix_X}).  In this case we lose many useful relations of the determinant such as the multiplicative property and the behavior with respect to elementary row operations due to the noncommutativity.  Yet for special types of matrices which we call scalar loop matrices we can define a quasi-determinant for which these properties still hold.

\begin{definition}
A matrix is called a scalar loop matrix if for any loop $L$ the quantity $S(L) = \frac{1}{2}(W(L) + W(L^{-1}))$ is scalar where $L$ and its inverse $L^{-1}$ begin with the same element but the sum is otherwise invariant under cyclic permutations of $L$. 
\end{definition}
\begin{definition}
Let $A$ be a $n$ by $n$ scalar loop matrix.  The loop determinant of $A$ is defined to be
\be
  \mathrm{Ldet}(A) = \sum_{\cC} \mathrm{sgn}(\cC) S(\cC)
  \label{eqn_sc_det_cycles}
\ee
where the sum is over all cycle covers $\cC = C_1...C_k$ on $\{1,..,n\}$.
\end{definition}
Note that for a commutative matrix $A$ Eq. (\ref{eqn_sc_det_cycles}) is equivalent to Eq. (\ref{eqn_det_cycles}) for which the multiplicative property was studied in \cite{Kovacs}.  The reason we are interested in this definition is because of the following observation.
\begin{lemma}
The matrix $1+X$ in Eq. (\ref{eqn_matrix_X}) is a scalar loop matrix.
\end{lemma}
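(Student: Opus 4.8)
The proof hinges on the rank-one structure of the off-diagonal blocks of $X$ and on the telescoping of products of dyads, together with one elementary $2\times 2$ identity: for any two spinors, $|u\ket[v| - |v\ket[u| = [v|u\ket\,\one$ (a one-line computation in components, or the observation that $\{|v\ket,|v]\}$ is a basis). I would first use this to dispose of the diagonal. Writing $X_{ii}=\sum_{e:\,s_e=t_e=i}\big(|z_e\ket[z_{e^{-1}}|-|z_{e^{-1}}\ket[z_e|\big)$, the identity shows $X_{ii}$ is already a multiple of $\one$, so $(1+X)_{ii}$ is scalar. Hence for a loop $L=(i_1,\dots,i_n)$ in which two consecutive indices coincide, the corresponding factor of $W(L)$ is scalar and central, and $S(L)=\frac12\big(W(L)+W(L^{-1})\big)$ is a scalar multiple of $S(L')$ for the shorter loop $L'$ obtained by deleting the repetition; by induction on the length of the loop it therefore suffices to prove the claim when the consecutive indices $i_1,\dots,i_n$ are pairwise distinct, in which case every factor $(1+X)_{i_ki_{k+1}}$ is the pure off-diagonal block $X_{i_ki_{k+1}}$.

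\textbf{Telescoping.} Next I would record, for $i\neq j$, the expansion $X_{ij}=\sum_{e}\sigma_e\,|a_e\ket[b_e|$, where $e$ runs over the edges joining $i$ and $j$ and, according to whether $e$ is outgoing or incoming at $i$, one has $(a_e,b_e,\sigma_e)=(z_e,z_{e^{-1}},+1)$ or $(z_{e^{-1}},z_e,-1)$. The key feature is that the ket spinor $a_e$ is the one attached to the half-edge at $i$ and the bra spinor $b_e$ the one at $j$; this makes the two structural facts transparent: (i) swapping $i\leftrightarrow j$ replaces the contribution of $e$ by $-\sigma_e\,|b_e\ket[a_e|$ (simultaneous sign flip and ket--bra transposition), and (ii) expanding $W(L)=X_{i_1i_2}\cdots X_{i_ni_1}$ as a multisum over a choice $\vec e=(e_1,\dots,e_n)$ of edge at each step, the scalar contractions $[b_{e_k}|a_{e_{k+1}}\ket$ collapse each term to $\big(\prod_k\sigma_{e_k}\big)\big(\prod_{k=1}^{n-1}[b_{e_k}|a_{e_{k+1}}\ket\big)\,|a_{e_1}\ket[b_{e_n}|$, a scalar times a single dyad whose ket and bra both sit at the vertex $i_1$.

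\textbf{Symmetrisation.} Finally, applying (i) to each factor, the $\vec e$-term of $W(L^{-1})=X_{i_1i_n}X_{i_ni_{n-1}}\cdots X_{i_2i_1}$ equals $(-1)^n\big(\prod_k\sigma_{e_k}\big)$ times the reversed dyad product; using $[x|y\ket=-[y|x\ket$ on each contracted bracket this becomes $-\big(\prod_k\sigma_{e_k}\big)\big(\prod_{k=1}^{n-1}[b_{e_k}|a_{e_{k+1}}\ket\big)\,|b_{e_n}\ket[a_{e_1}|$. Adding the two $\vec e$-terms and invoking $|a_{e_1}\ket[b_{e_n}|-|b_{e_n}\ket[a_{e_1}| = [b_{e_n}|a_{e_1}\ket\,\one$ collapses the contribution of $\vec e$ to $S(L)$ to $\frac12\big(\prod_k\sigma_{e_k}\big)\big(\prod_{k=1}^{n}[b_{e_k}|a_{e_{k+1}}\ket\big)\,\one$, with cyclic indices $a_{e_{n+1}}:=a_{e_1}$; this expression is manifestly invariant under cyclic rotation of $L$. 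Summing over all $\vec e$ gives $S(L)=X_L\,\one$, which is precisely the scalar loop property, and (up to the signs $\sigma_e$ and the factor $\frac12$) identifies $X_L$ with a sum of cyclic spinor products of the form $A_c(z_e)$ appearing in the Definition. I expect the only genuine effort to be the sign-and-orientation bookkeeping in (i)--(ii) and making the diagonal/repeated-index reduction completely rigorous; the algebraic content is carried entirely by the two one-line spinor identities.
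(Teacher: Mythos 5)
Your proof is correct and follows essentially the same route as the paper's: both reduce $W(L)+W(L^{-1})$ to a telescoped product of scalar brackets multiplying the dyad commutator $|a\ket[b|-|b\ket[a|=[b|a\ket\,\one$, after using the antisymmetry $[x|y\ket=-[y|x\ket$ to align the reversed product. The only difference is one of care, in your favor: you spell out the multigraph bookkeeping (the multisum over edge choices at each step) and the scalar diagonal blocks/repeated-index reduction explicitly, where the paper computes the complete simple-graph case and asserts the generalization.
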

\begin{proof}
First suppose $\Gamma$ is a complete oriented graph so that we can continue to label loops by pairs of vertices and let $L = (l_1 l_2 \cdots l_i)$ be a loop on $\{1,...,n\}$.  Then $X_{l_j l_k} = |z_{l_j l_k} \ket [z_{l_k l_j} |$ if the edge from $l_j$ to $l_k$ is positively oriented and the negative otherwise.  Suppose that $L$ has $|e|$ edges which are opposite the orientation.  Then
\be
  W(L) = (-1)^{|e|} |z_{l_1 l_2} \ket [z_{l_2 l_1} | z_{l_2 l_3} \ket \cdots [z_{l_{i} l_{i-1}} | z_{l_i l_1} \ket [ z_{l_1 l_i} |
\ee
and 
\be
  W(L^{-1}) = (-1)^{|e|+i} |z_{l_1 l_{i}} \ket [z_{l_{i} l_1} | z_{l_{i} l_{i-1}} \ket \cdots [z_{l_{2} l_{3}} | z_{l_2 l_{1}} \ket [ z_{l_1 l_2} |
\ee
Now using the identity $[z | w \ket = - [w | z \ket$ we have an extra factor of $(-1)^{i-1}$ in the second term and so
\be
  W(L) + W(L^{-1}) = (-1)^{|e|} [z_{l_2 l_1} | z_{l_2 l_3} \ket \cdots [z_{l_{i} l_{i-1}} | z_{l_{i} l_1} \ket \Big( |z_{l_1 l_2} \ket [ z_{l_1 l_i} | - |z_{l_1 l_i} \ket [ z_{l_1 l_2} | \Big)
\ee
now using $|z \ket [w| - |w \ket [z| = -[z|w\ket \one$ we have
\be
  S(L) \equiv \frac{1}{2}\left(W(L) + W(L^{-1})\right) = \frac{(-1)^{|e|}}{2} [z_{l_1 l_i} | z_{l_1 l_2} \ket [z_{l_2 l_1} | z_{l_2 l_3} \ket \cdots [z_{l_{i} l_{i-1}} | z_{l_{i} l_{1}} \ket \one
\label{eqn_weight_cycle}
\ee
By writing $X_{ij}$ as in Eq. (\ref{eqn_matrix_X}) we generalize $\Gamma$ to have any number of edges between pairs of vertices. In that case it is clear that $S(L)$ is equal to the sum of weights of the form on the r.h.s. of Eq. (\ref{eqn_weight_cycle}) over all loops in $\Gamma$ traversing the vertices $(l_1 l_2 \cdots l_i)$ in order.
\end{proof}
The purpose of this section is to prove the following lemma.
\begin{lemma}\label{Ldet=det}
Let $A$ be a scalar loop matrix composed of block matrices and denote the ordinary determinant by $|A|$.  Then
\be
  |A| = \left|\mathrm{Ldet}(A)\right|
\ee
\end{lemma}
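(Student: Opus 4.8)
The plan is to prove the identity by induction on $n$, the number of $2\times2$ blocks, by showing that $\mathrm{Ldet}$ and the ordinary $2n\times2n$ determinant are compatible with a block Schur--complement reduction. The starting observation is that the scalar loop hypothesis already supplies a good deal of commutativity: applied to the length-one loop $(i)$ it forces $A_{ii}=s_i\,\one$ for a scalar $s_i$, and applied to the length-two loop $(i,j)$ it forces $A_{ij}A_{ji}\in\C\,\one$. The base case $n=1$ is then immediate: $A=A_{11}=s_1\,\one$, so $|A|=s_1^{2}$ while $\mathrm{Ldet}(A)=S((1))=s_1$ (viewed as a scalar), giving $|A|=|\mathrm{Ldet}(A)|$.

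For the inductive step I would single out the block $n$, assume first that $s_n\neq0$, and form the Schur complement $\tilde A$, the $(n-1)\times(n-1)$ block matrix with $\tilde A_{ij}=A_{ij}-s_n^{-1}A_{in}A_{nj}$, so that by the standard block formula $|A|=|A_{nn}|\,|\tilde A|=s_n^{2}\,|\tilde A|$. The first thing to check is that $\tilde A$ is again a scalar loop matrix. For a loop $\tilde L$ in the complete graph on $\{1,\dots,n-1\}$, expanding each factor of $W_{\tilde A}(\tilde L)$ gives $\sum_R(-1)^{|R|}s_n^{-|R|}W_A(L_R)$, where $L_R$ is the loop in $A$ obtained by detouring through the vertex $n$ at the edges in $R$ (so $n$ may be traversed several times); since $s_n^{-1}$ is scalar and reversal of a loop commutes with the insertion of these detours, one obtains $S_{\tilde A}(\tilde L)=\sum_R(-1)^{|R|}s_n^{-|R|}S_A(L_R)$, which is scalar by the hypothesis on $A$ --- this is the point at which it matters that ``scalar loop matrix'' quantifies over all loops and not merely over cycles. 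The second thing to check is the identity $\mathrm{Ldet}(A)=s_n\,\mathrm{Ldet}(\tilde A)$; granting it, the induction closes, since then $|A|=s_n^{2}\,|\tilde A|=s_n^{2}\,\mathrm{Ldet}(\tilde A)^{2}=(s_n\,\mathrm{Ldet}(\tilde A))^{2}=\mathrm{Ldet}(A)^{2}=|\mathrm{Ldet}(A)|$.

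The identity $\mathrm{Ldet}(A)=s_n\,\mathrm{Ldet}(\tilde A)$ is the combinatorial heart of the argument, and the step I expect to be the real obstacle. It is the analogue, for the loop determinant, of the classical derivation of the Schur--complement formula through cycle covers: one splits $\mathrm{Ldet}(A)$ according to the cycle that contains $n$ --- either $n$ is a fixed point, contributing $s_n\,\mathrm{Ldet}(A|_{\{1,\dots,n-1\}})$, which matches the ``no detour'' part of $s_n\,\mathrm{Ldet}(\tilde A)$, or $n$ lies in a cycle of length at least two, whose contribution must be matched, cover by cover, against $s_n$ times the ``exactly one detour'' part of the expansion above. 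All terms in which $n$ is detoured through twice or more have to cancel among themselves, since in $\mathrm{Ldet}(A)$ every vertex is traversed exactly once; this cancellation is the classical sign pairing (``one loop through $n$ twice'' against ``two loops each through $n$ once'', and so on), and the scalarity of the pieces $A_{in}A_{ni}$ together with that of the symmetrized weights $S_A$ is precisely what lets one commute factors past one another so that the classical cancellation applies. A routine sign count, using $\sgn(\pi)=(-1)^{n+k}$ and the triviality of the sign contribution of a fixed point, confirms the matching of the surviving terms. Finally, the assumption $s_n\neq0$ --- and, more generally, the case in which some diagonal block vanishes, which the abstract statement permits even though for $A=\one+X$ the diagonal blocks are the identity whenever $\Gamma$ has no self-loops --- is removed by replacing $A$ with $A+\epsilon\,\one$, which is still a scalar loop matrix (adding $\epsilon\,\one$ to the diagonal only multiplies sub-loops by scalar powers of $\epsilon$), applying the result for $\epsilon$ near but not equal to $0$, and letting $\epsilon\to0$: both sides are polynomials in the entries of $A$, so the identity persists.
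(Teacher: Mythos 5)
Your strategy coincides with the paper's in its skeleton: induct on the number of blocks via the Schur complement $\tilde A_{ij}=A_{ij}-s_n^{-1}A_{in}A_{nj}$, check that $\tilde A$ is again a scalar loop matrix by expanding its loop weights over detours through the eliminated vertex (your formula $S_{\tilde A}(\tilde L)=\sum_R(-1)^{|R|}s_n^{-|R|}S_A(L_R)$ is exactly the paper's), and reduce everything to the identity $\mathrm{Ldet}(A)=s_n\,\mathrm{Ldet}(\tilde A)$, which in the paper appears as the invariance of $\mathrm{Ldet}$ under the row operations of Gaussian elimination. Your $\epsilon$-perturbation to handle a vanishing diagonal block is a genuine improvement: the paper asserts that $A_{11}^{-1}$ exists because $A_{11}$ is scalar, which is only true when that scalar is nonzero.

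The gap sits in the step you yourself flagged as the obstacle, and your proposed resolution of it does not work. The multi-detour terms do not cancel by ``the classical sign pairing.'' That pairing matches the weight of one loop passing through $n$ twice against the product of the weights of the two loops obtained by splitting it at $n$; over a commutative ring these are literally equal because the factors of the weight can be reordered. Here they are not: the scalarity of each $S_A(L)$ makes the \emph{summands} commute with one another, but it does not make $S_A$ multiplicative under concatenation of loops at a shared vertex, i.e.\ $S_A(C)\neq S_A(C_1)S_A(C_2)$ in general. For the matrix $X_{ij}=\pm|z_{ij}\rangle[z_{ji}|$ of the application this failure is exactly the content of the paper's Figure 1: the merged loop and the split pair carry different pairings of the four brackets at the junction vertex, and their difference is a Pl\"ucker-type recombination that has no reason to vanish term by term. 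The cancellation is real, but it only occurs after one additionally symmetrizes over the independent reversals of the two arcs meeting at the eliminated vertex: the paper groups the offending contributions into packets of eight cycle covers $C_1,\dots,C_8$ and shows that the combination $S(C_1)S(C_2)+S(C_3)S(C_4)-S(C_5)-S(C_6)-S(C_7)-S(C_8)$ is what is invariant, not any two-term subcombination. Your argument needs this (or an equivalent) grouping; as written, the pairwise cancellation you invoke is false, and a ``routine sign count'' cannot rescue it because it is the magnitudes, not the signs, that fail to match.
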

This lemma follows from the following property of the loop determinant.
\begin{proposition}
Let $A$ be a scalar loop matrix.  Then the loop determinant behaves as the usual determinant under all the elementary row operations.
In particular the addition of a scalar multiple of one row of $A$ to another row leaves the loop determinant invariant.
\label{thm_sc_det_row_operation}
\end{proposition}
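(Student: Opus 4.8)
The plan is to model the argument on the classical proof that a determinant is unchanged when a multiple of one row is added to another, carrying the noncommutativity of the $2\times 2$ blocks along carefully. It suffices to treat the operation that adds $\lambda$ times the $q$-th row to the $p$-th row ($p\neq q$, $\lambda$ a scalar), writing $A'$ for the result; rescaling a row by $\lambda$ and transposing two rows are handled by the same bookkeeping and are easier, since in any cycle cover exactly one cycle passes through a given vertex, so rescaling row $p$ merely pulls one central factor $\lambda$ out of $S(C)$ for that cycle. Throughout I use that, because $A$ is a scalar loop matrix, each $S(C)=\tfrac12\big(W(C)+W(C^{-1})\big)$ appearing in $\mathrm{Ldet}(A)=\sum_{\cC}\sgn(\cC)\,S(\cC)$ (where $S(\cC)=\prod_{C\in\cC}S(C)$) is a central scalar.

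The first step is a ``linearity in the $p$-th row'' identity. Let $\hat A$ be the matrix obtained from $A$ by overwriting its $p$-th row with a copy of its $q$-th row. Then
\[
  \mathrm{Ldet}(A')\;=\;\mathrm{Ldet}(A)\;+\;\lambda\,\mathrm{Ldet}(\hat A).
\]
This is local: in a cycle cover $\cC$ there is a unique cycle $C_0\ni p$, and each of $W(C_0)$ and $W(C_0^{-1})$ contains exactly one entry drawn from the $p$-th row. Replacing that entry by the corresponding entry of the modified row and pulling the central $\lambda$ out of the product gives $S'(C_0)=S(C_0)+\lambda\,\hat S(C_0)$, where $\hat S(C_0)$ is the symmetrized weight of $C_0$ computed in $\hat A$; every other cycle of $\cC$ avoids $p$ and is unchanged, so its weight is the same in $A$, $A'$ and $\hat A$. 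Summing over $\cC$ gives the identity.

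It then remains to show that $\mathrm{Ldet}(B)=0$ whenever two rows of $B$ coincide, applied with $B=\hat A$. (Note $\hat A$ itself need not be a scalar loop matrix; indeed for a generic block matrix $\mathrm{Ldet}$ of a matrix with two equal rows is nonzero, so the scalar loop hypothesis on $A$ genuinely enters here.) As in the classical argument I would construct a sign reversing involution on cycle covers of $\{1,\dots,n\}$: a cover in which the two repeated vertices $p,q$ lie on one cycle is paired with the cover obtained by splitting that cycle at $p$ and $q$ into two cycles, and a cover in which $p,q$ lie on two different cycles is paired with the merge of those cycles. This changes the number of cycles by one and so reverses $\sgn(\cC)$, while the cycles avoiding $\{p,q\}$ are untouched. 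The crux is that the two partners carry the same weight $S(\cC)$: here one uses (i) the equality of the $p$-th and $q$-th rows of $B$, so that ``leaving $p$'' and ``leaving $q$'' contribute the same block, and (ii) the scalar loop property of $A$, which renders the honest sub-loop through $q$ produced by the merge a central scalar, so that it can be commuted past the remaining noncommuting blocks to reassemble exactly $S$ of the merged (respectively split) configuration. Cancelling paired terms gives $\mathrm{Ldet}(B)=0$, hence $\mathrm{Ldet}(A')=\mathrm{Ldet}(A)$.

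The hard part will be this last weight matching: one has to track the cyclic ordering of the noncommuting $2\times 2$ blocks around each cycle and verify that the symmetrization $\tfrac12\big(W(C)+W(C^{-1})\big)$ is compatible with the split and merge operations, and it is precisely there that the scalar loop hypothesis --- converting the relevant loop factors into commuting scalars --- is indispensable. The rest is routine accounting with cycle covers and signs.
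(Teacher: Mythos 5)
Your overall architecture---linearity of $\mathrm{Ldet}$ in the modified row, followed by the vanishing of $\mathrm{Ldet}(\hat A)$ for the matrix $\hat A$ with row $p$ overwritten by row $q$---is a legitimate reorganization of what the paper actually does (the paper cancels the $\lambda$-terms in place, which amounts to the same two steps), and your linearity step is correct. You have also correctly located where the scalar-loop hypothesis must enter: the sub-loop through $q$ avoiding $p$ is an honest loop of $A$, so its symmetrized weight is central and can be commuted past the non-scalar blocks.

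The gap is in the mechanism you propose for the vanishing step. A sign-reversing involution that pairs each ``merged'' cover with a single ``split'' cover cannot be weight-preserving, because the loop determinant is built from the orientation-symmetrized weights $S(C)=\tfrac12\bigl(W(C)+W(C^{-1})\bigr)$, and the symmetrization is incompatible with a pairwise split/merge. Concretely, write the merged cycle as $(p,X,q,Y)$ with arcs $X=(x_1\dots x_r)$, $Y=(y_1\dots y_s)$. Its symmetrized weight is the sum of exactly two monomials, $\tfrac12 W(p,X,q,Y)+\tfrac12 W(p,Y^{\mathrm{rev}},q,X^{\mathrm{rev}})$, whereas the split weight $S\bigl((q,X)\bigr)S\bigl((p,Y)\bigr)$ expands, after using the row equality and the centrality of $S\bigl((q,X)\bigr)$, into the four monomials $\tfrac14 W(p,X^{\pm},q,Y^{\pm})$---one drawn from each of the \emph{four} distinct orientation classes of merged cycles through $p$ and $q$ built on these arcs. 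No single merged cover reproduces this sum, so no bijection between covers can match weights term by term; the cancellation is intrinsically many-to-many. The repair is exactly the grouping the paper uses: take both split covers $\{(p,X),(q,Y)\}$ and $\{(p,Y),(q,X)\}$ together with all four merged orientation classes (the paper's $C_5,\dots,C_8$; sixteen permutations once cycle orientations and their multiplicities in the cover sum are counted), and verify that the two aggregate sums agree because the central factors $S\bigl((q,X)\bigr)$ and $S\bigl((q,Y)\bigr)$ can be commuted past the non-scalar $p$-factors. With the involution replaced by this six-cover identity your argument closes; as written, the ``hard part'' you defer is not merely hard but impossible in the pairwise form you propose.
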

\begin{proof}
Suppose we add a scalar multiple $\lambda$ of row $i$ of $A$ to row $j$.  Then Eq. (\ref{eqn_sc_det_cycles}) is changed by replacing the single factor $A_{i \cdot}$ in each weight by $A_{i \cdot} + \lambda A_{j \cdot}$.  Therefore Eq. (\ref{eqn_sc_det_cycles}) becomes a sum of its original terms plus terms proportional to $\lambda$.  We will now show that all terms proportional to $\lambda$ cancel each other.

Let $\cC$ be a cycle cover of $1,...,n$.  Then there exists two possibilities: $i$ and $j$ are in the same cycle or $i$ and $j$ are in different cycles. Suppose that they are in the same cycle $C$ and let $\cC^\prime$ be the rest of $\cC$.  By cyclic invariance we can assume that $i = c_1$ and call $j=c_j$ where $C = (c_1 ... c_j... c_N)$. Replacing $A_{c_1 c_2}$ with $A_{c_{1} c_{2}} + \lambda A_{c_{j} c_{2}}$ in $W(C)$ we get 
\begin{align}
  W(C) 
  \rightarrow (A_{c_1 c_2} + \lambda A_{c_{j} c_{2}}) A_{c_2 c_3} \cdots A_{c_{j-1}c_{j}} A_{c_{j} c_{j+1}} \cdots A_{c_{N}c_{1}} 
  = W(C) + \lambda W(\widetilde{C}) N(C)
\end{align}
where $\widetilde{C} = (c_j c_{2} c_{3} ... c_{j-1})$ and $N(C) = A_{c_{j} c_{j+1}} A_{c_{j+1} c_{j+2}} \cdots A_{c_{N}c_{1}}$.  Now consider the cycle $\widehat{C} = (c_1 c_{j+1} c_{j+2} ... c_N)$ then
\begin{align}
  W(\widehat{C}) 
  \rightarrow (A_{c_1 c_{j+1}} + \lambda A_{c_{j} c_{j+1}}) A_{c_{j+1} c_{j+2}} \cdots A_{c_{N}c_{1}} 
  = W(\widehat{C}) + \lambda N(C)
\end{align}
and moreover
\be
 W(\widetilde{C}) W(\widehat{C})
  \rightarrow W(\widetilde{C})W(\widehat{C}) + \lambda W(\widetilde{C}) N(C)
\ee
This demonstrates that $W(C)$ and $W(\widetilde{C})W(\widehat{C})$ produce terms proportional to $\lambda$ which are equal but have opposite sign in Eq. (\ref{eqn_sc_det_cycles}) since $\text{sgn}(\widetilde{C} \widehat{C}) = -\text{sgn}(C)$.  We now show exactly how these terms cancel in Eq. (\ref{eqn_sc_det_cycles}), by considering eight cycle covers for which the terms proportional to $\lambda$ all cancel eachother.  Indeed, let $C_1 = (c_{1} c_{2}  ... c_{j-1})$, $C_2 = (c_{j} c_{j+1} c_{j+2} ... c_{N})$, $C_3 = (c_{1} c_{N} c_{N-1} ... c_{j+1})$, $C_4 = (c_{j} c_{j-1} c_{j-2}... c_{2})$, $C_5 = (c_{1} c_{2} ... c_{j-1} c_{j} c_{j+1} ... c_{N})$, $C_6 = (c_{1} c_{2} ... c_{j-1} c_{j} c_{N} c_{N-1} ... c_{j+1})$, $C_7 = (c_{1} c_{j-1} c_{j-2} ... c_{2} c_{j} c_{j+1} ... c_{N})$, $C_8 = (c_{1} c_{j-1} c_{j-2} ... c_{2} c_{j} c_{N} c_{N-1} ... c_{j+1})$ then it is straightforward to show that
\be
  S(C_1)S(C_2) + S(C_3)S(C_4) - S(C_5) - S(C_6) - S(C_7) - S(C_8)
\ee
is invariant after the row operation, i.e. the terms proportional to $\lambda$ cancel.  Conversely, if $c_1$ and $c_j$ are in different cycles we can write them as $C_1$ and $C_2$ in which case we can construct $C_3$,..., $C_8$ which leads to the same cancellation.

It is easy to see from Eq. (\ref{eqn_det_cycles}) that multiplying a row by a scalar produces an overall factor of $\lambda$ and switching two rows produces a minus sign, just like the determinant over a field.  Hence the loop determinant behaves as one would expect under all the elementary row operations.
\end{proof}
We can now give the proof of lemma \ref{Ldet=det} by induction.
\begin{proof}
By Theorem \ref{thm_sc_det_row_operation} the loop determinant is unchanged after Gaussian elimination so after eliminating the first column
\be
  \mathrm{Ldet}(A) 
  = \mathrm{Ldet}
  \bpm 
  A_{11} & A_{12} & \ldots & A_{1n} \\
  A_{21} & A_{22} & \ldots & A_{2n} \\
  \vdots & \vdots & \ddots & \vdots \\
  A_{n1} & A_{n2} & \ldots & A_{nn}
  \epm 
  = \mathrm{Ldet}
  \bpm
  A_{11} & A_{12} & \ldots & A_{1n} \\
  0 &    &   &    &  \\
  \vdots &   & B  &  \\
  0 &    &   &    &
  \epm
\ee
where $B$ is a $(n-1) \times (n-1)$ matrix with entries $B_{ij} = A_{ij} - A_{i1} A_{11}^{-1} A_{1j}$.  Note that since $A$ is a scalar loop matrix $A_{11}$ is scalar so $A_{11}^{-1}$ does indeed exist and is also scalar.  Furthermore, if $L=(l_1 l_2 ... l_i)$ is a loop of $\{2,3,...,n\}$ then $W_B(L) = B_{l_1 l_2} B_{l_2 l_3} \cdots B_{l_i l_1}$ can be expressed as
\be
  W_B(L) = W_A(L) + \sum_{\sigma} (-1)^{|\sigma|} W_{A}(L(\sigma))
\ee
where $\sigma \subset \{1,2,...,i\}$ and $L(\sigma) = (l_1 ... l_{\sigma_1} 1 l_{\sigma_1 + 1} ... l_{\sigma_2} 1 l_{\sigma_2+1} ... l_{i})$, i.e. it is $L$ with 1 inserted after every element of $\sigma$.  In other words $L(\sigma)$ is a loop of $\{1,2,3,...,n\}$ and so $S_B(L)$ is scalar which shows that $B$ is a scalar loop matrix.  

The hypothesis is clearly true for $n=1$ so now assume it is true for scalar loop matrices of size $(n-1) \times (n-1)$.  Then $|B| = \left|\mathrm{Ldet}(B)\right|$ which then implies
\be
  |A| = |A_{11}| \cdot |B| = |A_{11}| \cdot \left|\mathrm{Ldet}(B)\right| = \left|\mathrm{Ldet}(A)\right|
\ee 
which advances the induction hypothesis.
\end{proof}
Finally we apply the previous lemmas to the matrix $1+X$ in Eq. (\ref{eqn_matrix_X}).
\begin{lemma}
$|1+X| = \left(1 + \sum_{C} A_{C}(z_{e})\right)^{2}$ where the sum is over all disjoint cycle unions $C$ of $\Gamma$ and $A_{C}(z_e)$ is defined in Eq. (\ref{eqn_cycle_union_amp}).
\end{lemma}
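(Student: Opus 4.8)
The strategy is to route everything through the loop determinant. By the preceding lemma the block matrix $1+X$ is a scalar loop matrix, so Lemma~\ref{Ldet=det} applies and gives
\[
  |1+X| \;=\; \left|\,\mathrm{Ldet}(1+X)\,\right|,
\]
while by definition (\ref{eqn_sc_det_cycles})
\[
  \mathrm{Ldet}(1+X) \;=\; \sum_{\cC} \mathrm{sgn}(\cC)\, S(\cC),
\]
the sum running over all cycle covers $\cC = C_1\cdots C_k$ of $V_\Gamma=\{1,\dots,n\}$ with $S(\cC)=\prod_a S(C_a)$. Each $S(C_a)$ is, by the scalar loop property, a complex multiple of the $2\times2$ identity; hence $\mathrm{Ldet}(1+X)=c\,\one$ for a single scalar $c$ (a $2\times2$ matrix), and the right‑hand side of the lemma is exactly $|c\,\one|=c^{2}$. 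So the whole problem reduces to identifying $c$ with $1+\sum_C A_C(z_e)$.

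First I would dispose of the inert pieces of each cycle cover. Assuming $\Gamma$ has no self‑loops (a self‑loop only contributes a scalar to the diagonal block $X_{ii}$ and is handled identically), the diagonal blocks of $1+X$ equal $\one$, so every fixed point of the underlying permutation contributes the factor $\one$ to $S(\cC)$. Thus a cycle cover is determined by its collection of nontrivial cycles $\{C_1,\dots,C_k\}$, all of length $\ge 2$ and pairwise vertex‑disjoint, the remaining vertices being inert, and $\mathrm{sgn}(\cC)=\prod_a\mathrm{sgn}(C_a)$; the empty collection produces the constant $1$. Since an edge of $\Gamma$ determines its two endpoints, vertex‑disjoint cycles are automatically edge‑disjoint, so these collections are in bijection with the disjoint cycle unions $C$ of $\Gamma$ once one also records, inside each $S(C_a)$, which parallel edge of $\Gamma$ is used at every step — this is precisely the "sum over all loops of $\Gamma$ traversing the given vertices in order" noted after (\ref{eqn_weight_cycle}).

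Because $S(\cC)$ factorizes over the $C_a$, it then suffices to establish the single‑cycle identity: for a nontrivial cycle $C$ on vertices $(i_1,\dots,i_p)$,
\[
  \mathrm{sgn}(C)\, S(C) \;=\; \sum_{c}\, A_c(z_e),
\]
with $c$ ranging over the cycles of $\Gamma$ whose vertex sequence is $(i_1,\dots,i_p)$. Granting this, multiplying out over all disjoint collections and re‑summing yields $c=1+\sum_C A_C(z_e)$ and hence $|1+X|=\big(1+\sum_C A_C(z_e)\big)^{2}$.

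The one genuinely delicate point — and the step I expect to be the main obstacle — is verifying this last identity \emph{with all signs}. One must match $\mathrm{sgn}(C)=(-1)^{p-1}$ against the explicit scalar value of $S(C)$ coming from (\ref{eqn_weight_cycle}) (which carries a factor $(-1)^{|e|}$ and a $\tfrac12$), against the overall minus sign and the orientation factor $(-1)^{|e|}$ built into the definition of $A_c(z_e)$, and one must reconcile the fact that a nontrivial cycle and its reversal appear as distinct cycle covers (with equal $\mathrm{sgn}$ and equal $S$) whereas $\sum_C A_C$ is organized over unoriented cycles. Once the brackets $[z_{i_1 i_p}|z_{i_1 i_2}\ket\cdots$ produced by (\ref{eqn_weight_cycle}) are rewritten in the $[\tilde z_{e_1}|z_{e_2}\ket\cdots$ form using $[z|w\ket=-[w|z\ket$ and $\tilde z_e=z_{e^{-1}}$, these signs collapse to the claimed formula; the check is of the same nature as the sign computations carried out explicitly for the $\Theta_n$, 3‑simplex and 4‑simplex examples, now performed in general.
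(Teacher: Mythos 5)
Your proposal is correct and follows essentially the same route as the paper: reduce $|1+X|$ to $\left|\mathrm{Ldet}(1+X)\right|$ via the scalar-loop-matrix lemma, note that the diagonal $1$-cycles are inert so the surviving cycle covers are exactly the disjoint cycle unions of $\Gamma$ (with parallel edges summed inside each $S(C)$ and the degenerate terms killed by $[z_e|z_e\rangle=0$), and then match signs. The one step you defer --- the sign count --- is precisely what the paper's proof supplies: $\mathrm{sgn}(\cC)=(-1)^{n+k}$ combines with the $(-1)^{|e|}$ from Eq.~(\ref{eqn_weight_cycle}) so that each non-trivial cycle effectively carries $(-1)^{|n|+|e|+1}$, reproducing the $-(-1)^{|e|}$ in the definition of $A_c$, while the factor $\tfrac12$ in $S$ is compensated by each unoriented cycle appearing in two orientations, exactly as you anticipated.
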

\begin{proof}
By the previous lemmas 
\be
  |1+X| = \left|\mathrm{Ldet}(1+X)\right| = \left( \sum_{\cC} \mathrm{sgn}(\cC) S(\cC) \right)^2
\ee
where the sum is over all cycle covers of $V_\Gamma$.  Since the loop determinant is a scalar (proportional to the 2 by 2 identity), its determinant is a perfect square.  The 1-cycles of $1+X$ correspond to the diagonal which all have weight 1.  The cycle cover of all 1-cycles produces the term equal to unity.  The 2-cycles of $1+X$ all vanish since $[z_e|z_e \ket=0$. Therefore the cycle covers consist of disjoint unions of non-trivial cycles with the remaining vertices covered by 1-cycles.  This is enough to see that the weight from the loop determinant formula agrees with the weight in Eq. (\ref{eqn_cycle_union_amp}).  Now the sign of each term is $(-1)^{n+k}$ from the cycle cover and $(-1)^{|e|}$ from the weight formula in Eq. (\ref{eqn_weight_cycle}).  If a cycle cover has $i$ non-trivial cycles covering $n-r$ vertices then there are $k = i+r$ cycles in the cover.  Thus if we assign $(-1)^{|n|+|e|+1}$ to each non-trivial cycle where $|n|$ is the number of vertices in the cycle then $\sum (|n|+1) = (n-r) + i  = n+k-2r$ which agrees with the weight from the cycle cover.
\end{proof}
Now Theorem \ref{thm_amp} follows trivially from the last lemma.

\section{Proof of Theorem \ref{thm_gen}}
We now want to evaluate the determinant of $E - \T$. This is a anti-symmetric matrix of size $2N$ by $2N$ 
indexed by $e_1,...,e_N, e_{1}^{-1},...,e_{N}^{-1}$. Therefore this determinant can be evaluated as the 
square of the pfaffian of $E - \T$.
We cannot directly evaluate the Pfaffian of a matrix as a sum over cycles, however it is possible  following \cite{Gunter} 
to write the product of pfaffians of two $2N$ by $2N$ antisymmetric matrices as
\be
  \mathrm{pf} A \cdot \mathrm{pf} B = \sum_{\text{C}} (-1)^{k} W_{A,B}(C)
\ee
where the sum is over cycle covers $C = c_1, ..., c_k$ of $\{1,...,2N\}$ having $k$ cycles and where each cycle is of even length.  
The weight of a cycle cover is the product of the weights of its cycles and the weight of a single cycle $c = (i_1, ..., i_{n})$ with $i_1 > i_2,...,i_{n}$ is given by 
\be
W_{A,B}(c) =   A_{i_1 i_2} B_{i_2 i_3} A_{i_3 i_4} B_{i_4 i_5} ... A_{i_{n-1} i_{n}} B_{i_n i_1}.
\ee
The specification of $i_1$ as the largest element in the cycle avoids any ambiguity in the definition of the weight.  
If one chooses $B =E$ then $\mathrm{pf}E=(-1)^{N(N-1)/2}$ then we have an expression for $\mathrm{pf}A$ in terms of cycle covers up to an overall sign.  
Let us therefore  set $A =E-T^{\Gamma}$ and let us choose $B=E$.

Lets start by evaluating the weight of a 2-cycle.  Since $E_{ij}$ is non-vanishing only if $j = i \pm N$ the weight must have the form
\be
A_{i_{1}+N, i_{1}} E_{i_{1}, i_{1}+N} = (E- \T)_{e_{1}^{-1}e_{1}} = - (1+\T_{e_{1}^{-1}e_{1}}).
\ee
Note that $\T_{e^{-1}e}\neq 0$ only if $e$ forms a 1-cycle (or bubble) at a vertex of $\Gamma$, i.e. $s(e)=t(e)$.
We have used the correspondence between $i_{1}= e_{1}$  and $i_{1}+N=e^{-1}_{1}$ if $i_{1}<N$.
This shows that $2$-cycles of $\{1,...,2N\}$ correspond to an evaluation in terms of 1-cycles of $\Gamma$.

Lets now consider a 4-cycle of $\{1,...,2N\}$. There are two possibilities depending on whether the second index is $i_{2}$ or $i_{2}+N$.
In the first case we get
\be \label{eqn_2_cycle_1}
 A_{i_1+N,  i_2} E_{i_2, i_2 + N} A_{i_2 + N, i_1} E_{i_1, i_1+N}=  \T_{e_{1}^{-1} e_{2}} \T_{e_{2}^{-1}e_{1}}.
\ee
In the second case we have
\be \label{eqn_2_cycle_2}
 A_{i_1+N,  i_2+N} E_{i_2+N, i_2} A_{i_2,  i_1} E_{i_1, i_1+N}= -\T_{e_{1}^{-1} e_{2}^{-1}} \T_{e_{2}e_{1}}.
\ee
In both cases we have used the fact that since $c$ is a cycle we necessarily have $i_{1} \neq i_{2}$.  Hence (because of the presence of $B_{i_{1},i_{1}+N}$)
we have that $e_{1} \neq e_{2}^{-1}$. This means that we can replace the element $ (E -  \T)_{e_{2}e_{1}}$ by $- \T_{e_{2}e_{1}}$.
One can now see that these weights correspond to 2-cycles of $\Gamma$.  The first case corresponds to the cycle of edges  $(e_{1}e_{2})$ while the second case corresponds to $(e_{1}e_{2}^{-1})$.  Clearly at most one of (\ref{eqn_2_cycle_1}) and (\ref{eqn_2_cycle_2}) is nonvanishing, since at most two of the elements of $\T$ are nonvanishing depending on the orientation.  
The difference in sign comes from $B_{i_2+N, i_2}=-1$ while $B_{i_1, i_1+N}=B_{i_2, i_2+N}=1$.  In effect we obtain a minus sign for each edge that { disagrees} with the orientation of $\Gamma$,
we also get a minus sign for every edge.  

This result generalizes easily now to the case of a $2n$-cycle of $\{1,...,2N\}$.
The same reasoning shows that the weight 
\be
W_{A,B}(c) = A_{i_1+N, i_2} E_{i_2, i_2\pm N} A_{i_2\pm N, i_3} E_{i_3, i_3\pm N}\cdots A_{i_{n-1}\pm N, i_n} E_{i_1, i_1+ N}.
\ee
is non zero if and only if the sequence of edges  $(e_{1},\cdots, e_{n})$ corresponds to a simple loop $\ell$ of $\Gamma$ of length $n$.
In that case 
\be 
W_{A,B}(c) = (-1)^{n- |\bar{e}|} \T_{e_{1}^{-1} e_{2}} \T_{e_{2}^{-1} e_{3}} \cdots \T_{e_{n}^{-1} e_{1}} =  - A_{\ell}(\tau)
\ee
and $|\bar{e}|$ is the number of times $i_j > N$ in which case $B_{i_j, i_j-N} = -1$.  Again this corresponds to traversing the edge $e_{j}$ in the  orientation opposite to the one  of $\Gamma$
 thus $|\bar{e}|$ is the number of edges in $c$ which { disagrees} with the orientation of $\Gamma$.
 Not that if we denote $|e| =n-|\bar{e}|$ is the number of edges that { \it agrees} with the orientation of $\Gamma$.
This establish therefore the correspondence between $2n$-cycles $c$ of $\{1,...,2N\}$ and simple loops of $\Gamma$ of length $n$, moreover the amplitude for a simple cycle 
is precisely minus the amplitude of the loop in $\Gamma$.

A cycle cover $\cC$ on $\{1,...,2N\}$ consists of a disjoint union of 2-cycles and non-trivial (i-e the cycles which are not 2-cycles)  cycles of $\{1,...,2N\}$.
We established that each 2-cycle of $\{1,...,2N\}$ as a weight in the sum given by $(1+T_{e_{i}^{-1}e_{i}})$ where $(e_{i}^{-1}e_{i})$ correspond to a bubble in $\Gamma$.
We also established that each nontrivial cycle on $\{1,...,2N\}$ (with non-zero weight) corresponds to a simple loop of $\Gamma$ with amplitude $A_{\ell}$.
This shows that $\mathrm{pf}(E-\T)$ is (up to an overall sign) equal to
$$\sum_{L} \prod_{v\notin L} \left(\prod_{s(e)=v=t(e)}(1+ T_{e^{-1}e})\right) A_{L}(\tau) $$
where the sum is over disjoint union of simple loops of length at least 2 and the product is over all
vertices not in $L$, with a weight given by the product over the bubbles touching $v$ (and with the convention that the weight is $1$ if there is no bubbles).
Now if $T_{e^{-1}e}$ is non zero this means that $(e^{-1}e)$ is appositively oriented bubble; that is a simple loop of length 1.
Therefore expanding the previous product we get that the pfaffian of $(1+\T)$ is (up to an overall sign) equal to
\be
\sum_{L} A_{L}(\tau) \ee
where the sum is over disjoint union of simple loops of  any length.
Which is what we desired to establish.

\section{Proof of Corollary \ref{cor_gen}}

Suppose a simple loop $U = (e_1e_{2} \cdots e_{i-1} e_i\cdots e_{n-1}e_n)$ is such that $s(e_1) = s(e_i) = v$ and $t(e_{i-1}) = t(e_{n})=v$, i.e. it intersects itself at the vertex $v$.  
Then there exists another simple loop $T = (e_1e_{2} \cdots e_{i-1} e_{n}^{-1} e_{n-1}^{-1} \cdots  e_{i}^{-1})$ which also intersects itself at $v$.  Lastly, there exists a pair of simple loops $S = (e_1... e_{i-1})(e_i ... e_n)$ which share the vertex $v$.  The triple $S,T,U$ exhaust the collections of disjoint simple loops which have an intersection at $v$ and contain precisely the set of edges $\{e_1,...,e_n\}$.

Suppose that $p_1$ edges of $\{e_1,...,e_{i-1}\}$ and $p_2$ of $\{e_i,...,e_n\}$ disagree with the orientation of $\Gamma$.  
And lets introduce the amplitudes
\bea
T_{e_{1}\cdots e_{i-1}} \equiv \left(\tau_{e_{1}^{-1} e_{2}}^{s(e_{2})} \cdots \tau_{e_{i-2}^{-1} e_{i-1}}^{s(e_{i-1})}\right)
\eea
Then by the prescription (\ref{eqn_A_loop})
\bea
  A_U = (-1)^{p_1+p_2+1}\, T_{e_{1}\cdots e_{i-1}} \tau_{e_{i-1}^{-1} e_{i}}^{v}
 T_{e_{i}\cdots e_{n}}\tau_{e_{n}^{-1} e_{1}}^{v} \\
  A_T = (-1)^{p_1+p_2+n-i} \,T_{e_{1}\cdots e_{i-1}} \tau_{e_{i-1}^{-1} e_{n}^{-1}}^{v} T_{e_{n}^{-1}\cdots e_{i}^{-1}} \tau_{e_{i} e_{1}}^{v} \\
  A_S = (-1)^{p_1+p_2} \, \tau_{e_{1}\cdots e_{i-1}} \tau_{e_{i-1}^{-1} e_{1}}^{v} T_{e_{i}\cdots e_{n}} \tau_{e_{n}^{-1} e_{i}}^{v} 
\eea
Using the antisymmetry property  of $\tau$ shows that $ (-1)^{n-i}T_{e_{n}^{-1}\cdots e_{i}^{-1}} = T_{e_{i}\cdots e_{n}}$ 
Thus
\bea
  A_S + A_T + A_U = (-1)^{p_1 + p_2} \, T_{e_{1}\cdots e_{i-1}} T_{e_{i}\cdots e_{n}}
  \nonumber  \left( \tau_{e_{i-1}^{-1} e_{1}}^{v} \tau_{e_{n}^{-1} e_{i}}^{v} + \tau_{e_{i-1}^{-1} e_{n}^{-1}}^{v} \tau_{e_{i} e_{1}}^{v} - \tau_{e_{i-1}^{-1} e_{i}}^{v} \tau_{e_{n}^{-1} e_{1}}^{v}  \right)
\eea
For clarity let $1 = e_{i-1}^{-1}$, $2 = e_1$, $3 = e_{n}^{-1}$, and $4 = e_i$ then the last factor
\be
  \left( \tau_{12}^{v} \tau_{34}^{v} + \tau_{13}^{v} \tau_{42}^{v} - \tau_{14}^{v} \tau_{32}^{v}  \right)
\ee
is the Pl\"ucker relation and vanishes under the hypothesis.  Hence the only collections of simple loops which survive
the identification $\tau_{ee'}=[z_{e}|z_{e'}\ket$ are ones which are non-intersecting and do not share vertices with other simple loops, i.e. they are  disjoint unions of non-trivial cycles.



\begin{thebibliography}{99}

\bibitem{AL} 
  A.~Ashtekar and J.~Lewandowski,
  ``Background independent quantum gravity: A Status report,''
  Class.\ Quant.\ Grav.\  {\bf 21}, R53 (2004)
  [gr-qc/0404018].
  \bibitem{PR}
  G. Ponzano; T. Regge, 
  ``Semiclassical limit of Racah coefficients'',
   p1-58, in: Spectroscopic and group theoretical methods in physics, ed. F. Bloch, North-Holland Publ. Co., Amsterdam, 1968.
   
   \bibitem{EPRL}
J. Engle, E. Livine, R. Pereira and C. Rovelli,
{\it LQG vertex with finite Immirzi parameter},
Nucl.Phys.B799 (2008) 136-149 [arXiv:0711.0146]

\bibitem{FK}
L. Freidel and K. Krasnov,
{\it  A New Spin Foam Model for 4d Gravity},
Class.Quant.Grav.25 (2008) 125018 [arXiv:0708.1595]

 \bibitem{BarrettAs1} 
  J.~W.~Barrett, R.~J.~Dowdall, W.~J.~Fairbairn, H.~Gomes and F.~Hellmann,
  ``Asymptotic analysis of the EPRL four-simplex amplitude,''
  J.\ Math.\ Phys.\  {\bf 50}, 112504 (2009)
  [arXiv:0902.1170 [gr-qc]].

\bibitem{BarrettAs2} 
  J.~W.~Barrett, W.~J.~Fairbairn and F.~Hellmann,
  ``Quantum gravity asymptotics from the SU(2) 15j symbol,''
  Int.\ J.\ Mod.\ Phys.\ A {\bf 25}, 2897 (2010)
  [arXiv:0912.4907 [gr-qc]].

\bibitem{FC1} 
  F.~Conrady and L.~Freidel,
  ``On the semiclassical limit of 4d spin foam models,''
  Phys.\ Rev.\ D {\bf 78}, 104023 (2008)
  [arXiv:0809.2280 [gr-qc]].
  
\bibitem{FC2} 
  F.~Conrady and L.~Freidel,
  ``Quantum geometry from phase space reduction,''
  J.\ Math.\ Phys.\  {\bf 50}, 123510 (2009)
  [arXiv:0902.0351 [gr-qc]].
  
  \bibitem{poly} 
  E.~Bianchi, P.~Dona and S.~Speziale,
  ``Polyhedra in loop quantum gravity,''
  Phys.\ Rev.\ D {\bf 83}, 044035 (2011)
  [arXiv:1009.3402 [gr-qc]].
  \bibitem{twistedgeo}
L. Freidel and S. Speziale,
{\it Twisted geometries: A geometric parametrisation of SU(2) phase space},
arXiv:1001.2748
  
\bibitem{UN1} 
  L.~Freidel and E.~R.~Livine,
  ``The Fine Structure of SU(2) Intertwiners from U(N) Representations,''
  J.\ Math.\ Phys.\  {\bf 51}, 082502 (2010)
  [arXiv:0911.3553 [gr-qc]].
 
 \bibitem{UN2} 
  L.~Freidel and E.~R.~Livine,
  ``U(N) Coherent States for Loop Quantum Gravity,''
  J.\ Math.\ Phys.\  {\bf 52}, 052502 (2011)
  [arXiv:1005.2090 [gr-qc]].

  \bibitem{UN3} 
  M.~Dupuis and E.~R.~Livine,
  ``Holomorphic Simplicity Constraints for 4d Spinfoam Models,''
  Class.\ Quant.\ Grav.\  {\bf 28}, 215022 (2011)
  [arXiv:1104.3683 [gr-qc]].
  
  \bibitem{LD} 
  M.~Dupuis and E.~R.~Livine,
  ``Holomorphic Simplicity Constraints for 4d Riemannian Spinfoam Models,''
  arXiv:1111.1125 [gr-qc].
  
  \bibitem{Penrose}
  R. Penrose, ``Applications of negative dimensional tensors'',
  Advances in twistor theory, Huston and Ward eds. Research notes in mathematics, Pitman Publ. 308-312.\\
  R. Penrose, ``Angular momentum: an approach to combinatorial space-time'', in Quantum theory and beyond, T. Bastin ed., Cambridge Univ. Press., New York, 151-180.
  
  \bibitem{Schwinger}
  J.~Schwinger,
  ``On Angular Momentum,''
  U.S. \ Atomic \ Energy \ Commission.\ (unpublished) \ NYO-3071, (1952).
  
\bibitem{Bargmann}
  V.~Bargmann,
  ``On the Representations of the Rotation Group,''
  Rev.\ Mod.\ Phys.\  {\bf 34}, 829 (1962).


\bibitem{Labarthe}
  J.~J.~Labarthe,
  ``Generating Functions For The Coupling Recoupling Coefficients Of SU(2),''
  J.\ Phys.\ A  {\bf 8}, 1543 (1975).

\bibitem{Westbury}
 B.W.~Westbury, 
 ``A generating function for spin network evaluations, Knot theory,''
 Banach \ Center \ Publ.\, {\bf 42}, 447 (1998).

\bibitem{Garoufalidis}
 S.~Garoufalidis and R.~Van der Veen,
 ``Asymptotics of classical spin net-works,''
 arXiv:0902.3113 (2009).

\bibitem{Costantino}
 F.~Costantino and J.~March«e 
 ``Generating series and asymptotics of classical spin networks,''  arXiv:1103.5644 (2011).


  
  
  
   \bibitem{BarrettPR} 
  J.~W.~Barrett and I.~Naish-Guzm
  ``The Ponzano-Regge model,''
  Class.\ Quant.\ Grav.\  {\bf 26}, 155014 (2009)
  [arXiv:0803.3319 [gr-qc]].
  
  \bibitem{LS} 
  E.~R.~Livine and S.~Speziale,
  ``A New spinfoam vertex for quantum gravity,''
  Phys.\ Rev.\ D {\bf 76}, 084028 (2007)
  [arXiv:0705.0674 [gr-qc]].
  \bibitem{holomorph}
L. Freidel, K. Krasnov and E.R. Livine,
{\it Holomorphic Factorization for a Quantum Tetrahedron},
arXiv:0905.3627
    
  \bibitem{livine-bonzom} 
  V.~Bonzom and E.~R.~Livine,
  ``A New Hamiltonian for the Topological BF phase with spinor networks,''
  arXiv:1110.3272 [gr-qc].
  
   \bibitem{LJ} 
  V.~Aquilanti, H.~M.~Haggard, A.~Hedeman, N.~Jeevanjee, R.~G.~Littlejohn and L.~Yu,
  ``Semiclassical Mechanics of the Wigner $6j$-Symbol,''
  arXiv:1009.2811 [math-ph].

 \bibitem{Ljeff}
 L. Freidel and J. Hnybida,
 ``generalised 4-valent intertwiners'',
 to appear.
  
  
  
  
  
  
  \bibitem{Mahajan}
M. Mahajan and V. Vinay,
{\it Determinant: Old Algorithms, New Insights},
SIAM J. Discrete Math. 12 (1999) 474-490


\bibitem{Kovacs}
  I.~Kovacs and D.S.~Silver and S.G.~Williams,
  ``Determinants of Commuting-block matrices,''
  Am. \ Math. \ Mon.\ {\bf 10}, (1999). 

\bibitem{Gunter}
G\"unter Rote,
{\it Division-Free Algorithms for the Determinant and the Pfaffian: Algebraic and Combinatorial Approaches},
Computational Discrete Mathematics, 2122/2001 (2001) 119-135
  
  
  
  
  
   
 
 
%
%
%
%
%
%
%
%
%
%
%
%
%
%
%
%



\end{thebibliography}
\end{document}